\tiny\color{gray},% the style that is used for the line-numbers
\theoremstyle{plain}
\newtheorem{Thm}{Theorem}
\newtheorem{Claim}[Thm]{Claim}
\newtheorem{lemma}[Thm]{Lemma}
\newcommand{\cond}{\,\lv\,}
\newcommand{\wt}{\widetilde}
\newcommand{\E}{\mathbb{E}}
\newcommand{\gs}{\sigma}
\newcommand{\gl}{\lambda}
\newcommand{\gep}{\epsilon}
\newcommand{\nc}{\newcommand}
\newcommand{\eu}{\EuScript}
\newcommand{\indic}{\boldsymbol{1}}
\newcommand{\on}{\operatorname}
\nc{\G}{\eu{G}}
\nc{\lip}{\on{Lip}}
\nc{\izf}{\int_0^\infty}
\nc{\imf}{\int_{-\infty}^\infty}
\nc{\tand}{\text{ and }}
\nc{\tst}{\text{ s.t. }}
\nc{\fM}{\mathfrak{M}}
\nc{\fP}{\mathfrak{P}}
\nc{\bE}{\mathbb{E}}
\newcommand{\lv}{\bigl|}
\newcommand{\rec}{\frac{1}}
\newcommand{\R}{\mathbb{R}}
\nc{\N}{\mathbb{N}}
\nc{\mL}{\eu{L}}
\nc{\mA}{\eu{A}}
\nc{\M}{\eu{M}}
\nc{\B}{\eu{B}}
\nc{\vx}{\vec{x}}
\nc{\vy}{\vec{y}}
\nc{\DF}{\eu{F}}
\nc{\tX}{\wt{X}}
\nc{\mE}{\mathbb{E}}
\nc{\brM}{\bar{\mM}}
\nc{\tih}{\tilde{h}}
\nc{\lep}{\frac{\gl}{\gep}}
\nc{\tp}{\tau_{\partial}}
\nc{\sM}{\mM^{*}}
\nc{\ns}{\nu^*}
\renewcommand{\P}{\mathbb{P}}
\nc{\ulm}{\underline{\gl}}
\nc{\Lip}{\on{Lip}}
\nc{\Was}{\on{Was}}
\nc{\salg}{$\gs$-algebra }
\nc{\salgns}{$\gs$-algebra}
\nc{\salgs}{$\gs$-algebras }
\nc{\pt}{(\Omega,\eu{F},P)}
\nc{\ptt}{Let $\pt$ be a probability triple }
\nc{\bfX}{\mathbf{X}}
\nc{\rnfor}{}
\nc{\rnfig}{}
\nc{\pdf}{\rec{\sqrt{2\pi}} e^{-z^{2}/2}}
\nc{\sol}{\smallskip \noindent{\bf Solution}: }
\nc{\ttR}{{\tt R}}
\nc{\var}{\on{Var}}
\nc{\ul}{\underline\lambda}
\nc{\nce}{\operatorname{e}}
\nc{\cenplus}{\raisebox{1.5pt}{\text{\scriptsize +}}}
\nc{\sa}{$\sigma$-algebra}
\nc{\e}{\mathrm{e}}
\nc{\bX}{\mathbf{X}}
\renewcommand{\mE}{\mathcal{E}}
\nc{\tmu}{\tilde{\mu}}
\nc{\tx}{\tilde{\xi}}
\nc{\ttau}{\tau}
\nc{\tsig}{\tilde{\sigma}}
\nc{\hA}{\hat{A}}
\nc{\hR}{\hat{R}}
\nc{\ha}{\hat{a}}
\nc{\hD}{\hat{D}^{*}}
\nc{\bA}{\mathbf{A}}
\nc{\constc}{c}
\nc{\constcp}{c'}
\nc{\J}{\eu{J}}
\nc{\tB}{\wt{B}}
\nc{\tf}{\tilde{f}}
\nc{\tD}{\wt{\Delta}}
\nc{\U}{\mathcal{U}}
\nc{\V}{\mathcal{V}}
\nc{\tU}{\check{U}}
\nc{\tV}{\check{V}}
\nc{\tN}{\wt{N}}
\nc{\tY}{\wt{Y}}
\begin{document}
\title[Stochastic growth rates with rare migration]{Stochastic growth rates for populations in random environments with rare migration}
\author{David Steinsaltz} 
\author{Shripad Tuljapurkar}
\address{David Steinsaltz\\Department of Statistics\\University of Oxford\\24--29 St Giles\\Oxford OX1 2HB\\United Kingdom}
 \address{Shripad Tuljapurkar\\454 Herrin Labs\\Department of Biology\\Stanford University\\Stanford CA 94305-5020\\USA}

\begin{abstract}
The growth of a population divided among spatial sites, with migration between the sites, is sometimes modelled by a product of random matrices, with each diagonal elements representing the growth rate in a given time period, and off-diagonal elements the migration rate.  The randomness of the matrices then represents stochasticity of environmental
conditions. We consider the case where the off-diagonal elements are small, representing a situation where migration has been introduced into an otherwise sessile meta-population. We examine the asymptotic behaviour of the long-term growth rate. When there is a single site with the highest growth rate, under the assumption of Gaussian log growth rates
at the individual sites (or having Gaussian-like tails)
we show that the behavior near zero is like a power of $\epsilon$, and derive upper and lower bounds for the power in terms of the difference in the growth rates and the distance between the sites. In particular, when the difference in mean log growth rate between two sites is sufficiently small, or the variance of the difference between the sites
sufficiently large, migration will always be favored by natural selection, in the sense that introducing a small amount of migration will increase the growth rate of the population relative to the zero-migration case.
\end{abstract}
\maketitle

\section{Introduction}
\subsection{Biological motivation}
If a population is divided among spatial sites with distinct fixed growth rates,
with no migration between sites, the numbers in the best site will become overwhelmingly larger than
those at the other sites, and the overall population growth rate will be determined by 
the rate prevailing at the best site. Introducing migration between sites, as Karlin showed \cite{karlin82},
will always reduce the long-run growth rate of the total population.

Karlin's theorem assumes deterministic growth. den Boer \cite{boer1968spreading} argued that migration may increase long-run growth when there is independent or weakly correlated stochastic variation in growth among sites.
But Cohen \cite{cohen1966} and Cohen and Levin \cite{cohen1991} used analysis and simulations to show that long-run growth of a population could increase as a result of a life cycle delay when there are some kinds of random variation in time, or by migration when there are some kinds of random variation across space. 
These kinds of stochastic variation have been formulated as random matrix models  whose Lyapunov exponent  is the long-run growth rate of the population, as discussed by \cite{tuljapurkar2000ets, Wiener1994}. In this general setting, we would like to know whether the long-run growth rate increases when there is mixing in space and/or time \cite{tuljapurkar2000ets} --- biologically, when should migration and/or delay be favoured to
evolve? A general and precise answer has been difficult because previous work \cite{Wiener1994} shows that the long-run growth rate can be singular (e.g., non-differentiable) in the limit of no mixing. A similar singularity arises in random-matrix models used in models of disordered matter \cite{derrida1983singular}. 

In the companion paper \cite{diapause2018} we consider a simple model of migration among multiple sites, where two or more
sites have the same optimal average log growth rate. We show there that a small increase from zero migration
to migration at a small rate $\epsilon$ is associated with an increase on the order of $1/\log\epsilon^{-1}$,
a change that overwhelms any cost of migration that is on the order of $\epsilon$ itself. 
As discussed there, while such a specification strains credulity when our life-history story is of individuals
migrating among independently varying sites or patches, it arises naturally when we turn from
geographic to demographic structure, reinterpreting ``sites'' as age classes. Rare migration becomes, in this
framework, rare diapause, a rare random delay in an otherwise deterministic life history.

In this paper we consider the more generic situation for migration, where there is a single optimal
site, where the mean log growth rate is highest, and then one or more alternative sites where growth
is slower on average. We show that, under some plausible conditions, the increase in population
growth rate with migration at rate $\epsilon$ (from $\epsilon=0$) is approximately 
proportional to a power $\epsilon^p$. We can bound $p$, yielding conditions under which $p<1$, making
small deviations from zero migration advantageous in spite of migration costs on the order of $\epsilon$.
Our results complement the analysis in \cite{ERSS12} 
of optimal migration rates for populations divided
among sites with varying stochastic growth rates.
There interacting diffusions are used to characterize the migration rate that maximizes the long-run stochastic growth rate. 

\subsection{Notation and basic assumptions} \label{sec:assumptions}
Suppose $D_{1},D_{2},\dots$ is an i.i.d.\  sequence of $d\times d$ diagonal matrices,
representing population growth rates at $d$ separate sites in a succession of times.
We write $\xi_{t}^{(0)},\dots,\xi_{t}^{(d-1)}$ for the diagonal elements of
$D_{t}$, and assume $X_{t}^{(i)}:=\log\xi_{t}^{(i)}$ all have finite 
mean $\mu_{i}$ and  finite variance $\tau_i$. We order them so that $\mu_{0}$ 
is the largest. We also write $\tmu^{(i)}:=\mu_{0}-\mu_{i}$. We assume that $\tmu^{(i)}>0$ for all $i=1,\dots,d-1$.

We will be assuming throughout that $X_{t}^{(0)}$ is Gaussian with mean $\mu_0$ and variance $\tau^{(0)}<\infty$, and that for $j\ge 1$
$$
\tX_{t}^{(j)}:=X_{t}^{(j)} - X_{t}^{(0)},
$$ 
is Gaussian with mean $-\tmu^{(j)}$ and variance $\tau^{(j)}<\infty$.
This assumption is made to simplify the notation in the proofs. It would suffice to assume these
variables to be sub-Gaussian, in which case different versions of the sub-Gaussian variance factor
would appear in the upper and lower bounds. The notation for sub-Gaussian random variables,
and the appropriate modification of the main result, are outlined briefly in section \ref{sec:Orlicz}.

We write $\bX$ for the complete collection of all $X^{(j)}_{t}$ for $j=0,1,\dots,d-1$,
$0\le t<\infty$.

We define the {\em migration graph} $\M$ to be a simple and irreducible directed graph 
whose vertices are the sites $\{0,\dots,d-1\}$, representing the transitions that have nonzero
probability.
We let $A_{t}$ be an i.i.d.\ sequence of nonnegative
$d\times d$ matrices with zeros on the diagonal, representing migration rates in time-interval $t$. 
We follow the convention from the matrix population model literature, that transition
rates from state $i$ to state $j$ are found in matrix entry $(j,i)$. Population distributions
are thus naturally column vectors, and are updated from time $t-1$ to time $t$ by left multiplication.

We assume $A_t(j,i)$ are bounded above almost surely.
We assume that if $i\nrightarrow j$ then $A_t(j,i)$ is identically 0, while for $i\rightarrow j$
$$ 
 \E \bigl[ A_{t}(j,i)\cond D_{t}\bigr]
$$ 
is bounded below almost surely. We assume that
the collection of pairs $(D_{t},A_{t})_{t=0}^\infty$ is jointly independent, but note that 
we do not assume for a given $t$ that $A_{t}$ and $D_{t}$ are independent, only 
that there is a lower bound to how close $A_{t}(j,i)$ can
come to 0 that is independent of all $D_{t}$.

We let  $\Delta_t$ be
a random diagonal matrix with entries $\Delta_t^{0},\dots,\Delta_t^{d-1}$. (Generally we will be thinking of $\Delta$ as the growth or survival
penalty for migration, so that the entries will be negative, but this
is not essential.) We assume the penalty acts multiplicatively on growth ---
this seems reasonable from a modeling perspective, and avoids the 
problem of negative matrix entries --- and is proportional to $\epsilon$. We assume that these
penalties are almost surely bounded, with $\|\Delta\|:= \max_{i,j} \on{ess}\sup \bigl|\Delta_t^i-\Delta_t^j \bigr|<\infty.$

We define
$$
D_{t}(\epsilon):= \e^{\epsilon \Delta}D_{t}+\epsilon A_{t}.
$$
For $\epsilon>0$ the i.i.d.\  sequence $D_{t}(\epsilon)$
satisfies the conditions for the existence of a stochastic growth rate independent
of starting condition.\cite{jC79} That is, if we define the partial products
$$
R_{T}(\epsilon):= D_{T}(\epsilon)\cdot D_{T-1}(\epsilon)\cdot\cdots\cdot D_{1}(\epsilon)
$$
then
$$
a(\epsilon):=\lim_{T\to\infty} T^{-1}\log R_{T}(\epsilon)_{ij}
$$
are well defined deterministic quantities, in the sense that the limit exists
almost surely, is almost-surely constant, and is the same for any 
$0\le i,j\le d-1$.

Of course, $R_{T}(0)$ is not so simple. The off-diagonal terms
are all 0, while on the diagonal, by the Strong Law of Large Numbers,
$$
\lim_{T\to\infty} T^{-1}\log R_{T}(0)_{ii}=\mu_{i}.
$$

\subsection{The effect of the penalty $\Delta$}
We will mostly be concerned with analyzing the case $\Delta\equiv 0$.
For most purposes, $\Delta$ has no effect. 
But this is not always true.

The crucial point is that the effect of $\Delta$ is always nearly linear in $\epsilon$,
while the increase of $a$ near 0 is often superlinear, growing as $\epsilon^{\beta}$.
If the power $\beta$ is strictly less than 1, the rapid increase in $a$ near 0 will be qualitatively
unaffected by a linear term for $\epsilon$ sufficiently small. 
Even when the linear term is negative (as we will generally be assuming it to be),
the growth rate $a$ will still be increasing on a small interval of $\epsilon>0$.

On the other hand, as discussed in section \ref{sec:mainresult} in some cases
we cannot exclude the possibility that the growth rate when $\Delta\equiv 0$
is qualitatively like $\epsilon^{\beta}$ with $\beta\ge 1$. If $\beta>1$ and
$\Delta_{0}<0$ then $a$ will be decreasing near 0; if $\beta=1$ then a
more sensitive analysis would be required.

Since the upper and lower bounds on the appropriate power of $\epsilon$ in 
Theorem \ref{T:diffrate1} are distinct, 
with the lower bound on the growth rate (the upper bound on the power of 
$\epsilon$) being sometimes larger than 1, the current
results will not always permit us to ascertain
whether the growth rate increases or decreases for small
increases in $\epsilon$.

\subsection{Main result} \label{sec:mainresult}
If $d=2$ we have an upper bound that $a(\epsilon)-a(0)$ is smaller than
$\epsilon^{4\tmu^{(1)}/(2\tmu^{(1)}+\tau^{(1)})}$, and lower bound $\epsilon^{4\tmu^{(1)}/\tau^{(1)}}$.
For $d>2$ this becomes slightly more complicated for two reasons: First, the
growth will be dominated by one dimension that has the fastest growth; second,
the increment to growth will be smaller if direct transition between the best two
sites is impossible. For this purpose, for each $1\le j\le d-1$ we define $\kappa_{j}$
to be the smallest length of a cycle in $\M$ that starts and ends at 0, and passes through $j$.
(Thus $\kappa_{j}\ge 2$, and is equal to 2 when $A(0,j)>0$ 
and $A(j,0)>0$ both with positive probability.) Define also
\begin{equation} \label{E:rho}
\rho^{(j)}:= \frac{\tmu^{(j)}}{\tau^{(j)}}\, .
\end{equation}
The calculation of $\rho$ is illustrated in Figure \ref{F:rho}.
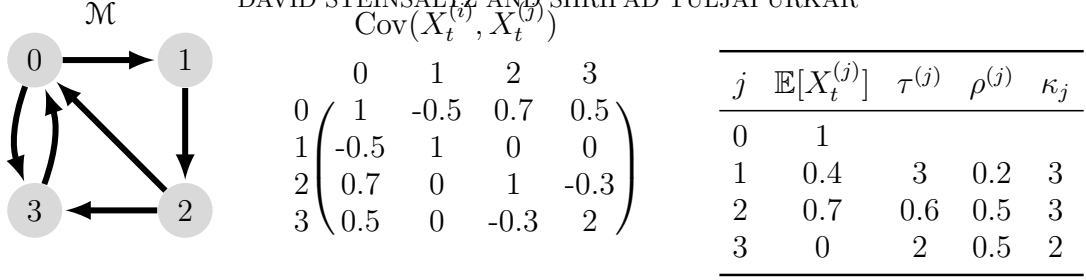
\begin{figure}[ht]
	\begin{center}
		\begin{minipage}[t]{0.15\textwidth}
			\vspace*{-2.7cm}
			$$\M$$
			\begin{tikzpicture}[scale=1, transform shape]
			\tikzstyle{every node} = [circle, fill=gray!30]
			\node (3) at (0, 0) {3};
			\node (2) at +(2,0) {2};
			\node (0) at +(0,2) {0};
			\node (1) at +(2,2) {1}; 
			\foreach \from/\to in {2/0, 2/3, 0/1, 1/2}
			\draw[every loop,
			auto=right,
			line width=.8mm,
			>=latex,
			draw=black,
			fill=black]
			(\from)		edge		 (\to);
			\draw[every loop,
			auto=right,
			line width=.8mm,
			>=latex,
			draw=black,
			fill=black]
			(0)		edge[bend right=20]  (3)
			(3)		edge[bend right=20]  (0);
%			
%			\draw [->] (\from) -- (\to);
			\end{tikzpicture}
		\end{minipage}
		\begin{minipage}[t]{0.4\textwidth}	
					\vspace*{-2.5cm}
			\begin{center}$\on{Cov}(X_t^{(i)}, X_t^{(j)})$
				$$
				\begin{blockarray}{ccccc}
				& 0 & 1 & 2 & 3 \\
				\begin{block}{c (c c c c)}
				0 & 1 & $-0.5$ & $0.7$ & $0.5$\\
				1 & $-0.5$ & 1 & 0 & 0\\
				2 & $0.7$ & 0 & 1 & $-0.3$\\
				3 & $0.5$ & 0 & $-0.3$ & 2\\
				\end{block}
				\end{blockarray}
				$$
			\end{center}
		\end{minipage}
			\begin{minipage}[t]{0.3\textwidth}
		%		\vspace*{-3.4cm}
		\begin{center}	\begin{tabular}{c c c c c}
				\toprule
				$j$&$\E[X_t^{(j)}]$&$\tau^{(j)}$&$\rho^{(j)}$& $\kappa_j$\\\midrule
				0 & 1 &  & &\\
				1 & $0.4$ & 3 & $0.2$ & 3 \\
				2 & $0.7$ & $0.6$ & $0.5$ & 3\\
				3 & 0 & $2$ & $0.5$ & 2 \\\bottomrule\\
			\end{tabular}
		\end{center}
	\end{minipage}
	\end{center}
\caption{Calculating $\rho$ on a four-site graph. We
see that both $\kappa\rho$ and $\kappa\rho/(1+2\rho)$ are minimized at site 1, despite the fact that it is
not in the shortest cycle, nor does it have the
smallest mean difference in log growth rate from
the optimal site 0.}
\label{F:rho}
\end{figure}

\begin{Thm} \label{T:diffrate1}
Under the assumptions of section \ref{sec:assumptions}, let $j$ be the site that minimises $\kappa_{j}\rho^{(j)}$, and
$j'$ the site that minimises $\kappa_{j'}\rho^{(j')}/(1+2\rho^{(j')})$.
If $\Delta_{0}=0$ then for any $\constcp>0$ there are positive constants $C,C'$ (depending
on the $\kappa, \rho,\rho_{*},d, \mu_{A},\ttau_{A}$) such that
for all $\epsilon>0$ sufficiently small,
\begin{equation} \label{E:polyepsMoCd3}
\frac{C}{\log \epsilon^{-1}} \epsilon^{2\kappa_{j}\rho^{(j)}} \le
a(\epsilon)-a(0)\le
C' \bigl(\epsilon \log\epsilon^{-1}\bigr)^{2\kappa_{j'}\rho^{(j')}/(1+2\rho^{(j')})} (\log \epsilon^{-1})^{\constcp}.
\end{equation}

Suppose now $\E[\Delta_{0}]< 0$. Then
\begin{itemize}
\item If $2\kappa_{j}\rho^{(j)}<1$ then both
bounds in \eqref{E:polyepsMoCd3} still hold;
\item If 
$
\frac{2\kappa_{j'}\rho^{(j')}}{1+2\rho^{(j')}}<1
   \le  2\kappa_{j}\rho^{(j)}
$ 
then the upper bound in \eqref{E:polyepsMoCd3} holds;
\item If $\frac{2\kappa_{j'}\rho^{(j')}}{1+2\rho^{(j')}}>1$ then
$a$ is differentiable at $0$, with $a'(0)=-\Delta_{0}$.
\end{itemize}
\end{Thm}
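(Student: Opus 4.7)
The plan is to absorb the scalar part of the penalty matrix into a prefactor, apply the $\Delta\equiv 0$ bounds of the first half of the theorem to the resulting perturbed product, and then combine with the linear correction from $\Delta$ case by case.

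\textbf{Scalar extraction.} I would set $\tilde\Delta_t := \Delta_t - \Delta_t^{(0)} I$, a diagonal matrix with vanishing $(0,0)$ entry and remaining entries bounded by $\|\Delta\|$. Then $e^{\epsilon\Delta_t} = e^{\epsilon\Delta_t^{(0)}}e^{\epsilon\tilde\Delta_t}$, and since $e^{\epsilon\Delta_t^{(0)}}$ is a scalar it factors out of each step as
$$
D_t(\epsilon) = e^{\epsilon\Delta_t^{(0)}}\,\hat D_t(\epsilon), \qquad \hat D_t(\epsilon) := e^{\epsilon\tilde\Delta_t}D_t + e^{-\epsilon\Delta_t^{(0)}}\epsilon A_t.
$$
Multiplying over $t$ and applying the SLLN yields
$$
a(\epsilon) \;=\; \epsilon\,\mathbb{E}\bigl[\Delta_0^{(0)}\bigr] \;+\; \hat a(\epsilon),
$$
where $\hat a(\epsilon)$ is the Lyapunov exponent of $\hat R_T(\epsilon):=\hat D_T(\epsilon)\cdots\hat D_1(\epsilon)$.

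\textbf{Perturbed $\Delta\equiv 0$ bounds.} Next I would invoke the first half of the theorem on $\hat a$. The matrices $\hat D_t(\epsilon)$ sit in the $\Delta\equiv 0$ framework with only mildly perturbed parameters: the diagonal at site $0$ is unchanged; the log growth at each site $i\ge 1$ acquires the additive shift $\epsilon\tilde\Delta_t^{(i)}$; and the migration matrix is rescaled by $e^{-\epsilon\Delta_t^{(0)}}=1+O(\epsilon)$, preserving both the almost-sure upper bounds and the conditional lower bounds on its entries. The perturbations to $\tilde\mu^{(j)}$ and $\tau^{(j)}$ are $O(\epsilon)$ and $O(\epsilon^2)$ respectively, so the exponents $2\kappa_j\rho^{(j)}$ and $2\kappa_{j'}\rho^{(j')}/(1+2\rho^{(j')})$ shift by $O(\epsilon)$; since $\epsilon^{O(\epsilon)}=1+o(1)$, \eqref{E:polyepsMoCd3} applies to $\hat a(\epsilon)-\mu_0$ with only the constants altered.

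\textbf{Case analysis and main obstacle.} Writing $a(\epsilon)-a(0) = \epsilon\,\mathbb{E}[\Delta_0^{(0)}] + (\hat a(\epsilon)-\mu_0)$ and combining with the perturbed bounds: in case (i) the condition $2\kappa_j\rho^{(j)}<1$ forces $\epsilon^{2\kappa_j\rho^{(j)}}/\log\epsilon^{-1}\gg\epsilon$, so the migration lower bound dominates the negative linear term, and the upper bound is automatically preserved because $\epsilon\,\mathbb{E}[\Delta_0^{(0)}]\le 0$. In case (ii), the upper-bound exponent remains $<1$ and continues to dominate $\epsilon$, so the upper bound persists, but the lower bound has exponent $\ge 1$ and may be overwhelmed by the linear term, hence it is dropped. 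In case (iii), since $2\kappa_j\rho^{(j)}\ge 2\kappa_{j'}\rho^{(j')}/(1+2\rho^{(j')})>1$, both bounds yield $\hat a(\epsilon)-\mu_0 = o(\epsilon)$, whence $a(\epsilon)-a(0) = \epsilon\,\mathbb{E}[\Delta_0^{(0)}] + o(\epsilon)$, giving differentiability at $0$ with the asserted derivative. The principal obstacle is the perturbed-bounds step: one must track how the constants $C, C'$ in \eqref{E:polyepsMoCd3} depend on $\rho^{(j)}$, $\tau^{(j)}$ and the graph data, in order to confirm that an $O(\epsilon)$ parameter shift alters them by at most a factor of $1+o(1)$.
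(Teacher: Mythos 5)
Your proposal addresses only the easy, peripheral part of the theorem and leaves the entire substantive content unproved. The reduction you perform --- extracting the scalar factor $e^{\epsilon\Delta_t^0}$, observing that the residual perturbation of the growth and migration parameters is $O(\epsilon)$, and then running the three-way case analysis comparing the linear term $\epsilon\,\E[\Delta_0^0]$ against the exponents $2\kappa_j\rho^{(j)}$ and $2\kappa_{j'}\rho^{(j')}/(1+2\rho^{(j')})$ --- is sound as far as it goes, and is in the same spirit as what the paper does (the paper keeps the $\Delta$ contributions inside the excursion sums and absorbs them into $\pm\epsilon\|\Delta\|$ corrections in its bounds \eqref{E:secondLB} and \eqref{E:secondUB}). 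But at the critical step you write ``I would invoke the first half of the theorem on $\hat a$.'' The first half of the theorem --- the two-sided bound \eqref{E:polyepsMoCd3} in the case $\Delta\equiv 0$ --- \emph{is} the theorem; invoking it is circular. Nothing in your proposal explains why $a(\epsilon)-a(0)$ is bounded below by $C\epsilon^{2\kappa_j\rho^{(j)}}/\log\epsilon^{-1}$ or above by $C'(\epsilon\log\epsilon^{-1})^{2\kappa_{j'}\rho^{(j')}/(1+2\rho^{(j')})}(\log\epsilon^{-1})^{c'}$, and these exponents do not follow from any soft or perturbative argument.

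What is missing is the entire machinery of the paper: (a) the excursion decomposition of $\log R_T(0,0)$ as a log-sum over excursion sequences away from site $0$ (Lemma \ref{L:allterms}), yielding the max-type bounds \eqref{E:secondLB}--\eqref{E:secondUB}; (b) for the lower bound, the construction of a designed excursion of length $m\approx\kappa\log\epsilon^{-1}/\tmu$ that travels to the optimal alternative site as fast as possible, lingers there, and returns, together with a Gaussian tail estimate for $\E[(Y_\ell)_+]$ showing the payoff is of order $\epsilon^{2\kappa\rho}/\log\epsilon^{-1}$; and (c) for the upper bound, the combinatorial count of $\#\hE_{T;k,n,m}$, the Gaussian large-deviation bound on $\max_{\he}\he[\bX,\mathbf{1}]$, the optimisation of the resulting functional $\Theta(S,\beta,u)$ over the excursion rate, length and diameter (Claim \ref{C:smallP}), a Borel--Cantelli argument, and the stratification by excursion diameter needed when $\kappa_j$ and $\rho^{(j)}$ are not minimised at the same site. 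Your concluding remark that ``the principal obstacle is tracking how the constants depend on the parameters'' misidentifies the difficulty: the obstacle is proving the bounds at all, and in particular identifying where the exponents $2\kappa\rho$ and $2\kappa\rho/(1+2\rho)$ come from, which is the balance between the entropy of excursion configurations, the $\kappa\log\epsilon^{-1}$ cost of the migration steps, and the Gaussian fluctuations accumulated during a long stay at a suboptimal site.
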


In the Appendix we discuss that the requirement for these

Note that $\rho^{(j)}=\infty$ when the distribution of $\xi_{t}^{(j)}$
is not heavy-tailed --- for example, very natural choices such as gamma-distributed diagonal elements --- 
making the lower bound on the left-hand side vacuous,
but it remains an open question whether zero subvariance (see the Appendix for definitions) implies that
the approach of $a(\epsilon)$ to 0 is faster than polynomial in $\epsilon$.

\section{Excursion decompositions}   \label{sec:Excursions}
Since we are assuming the unique maximum average growth rate is at site 0,
the maximum growth for the perturbed process will arise from rare excursions away from 0;
in particular, from those that include the (not necessarily unique) site that minimises $\rho\kappa$ in \eqref{E:rho}.

\nc{\hE}{\hat{\mE}}
\nc{\ee}{\mathbf{e}}
\nc{\he}{\hat{\ee}}
\nc{\Kappa}{K}
Define $\mE$ to be the set --- called the {\em excursions from 0} --- 
of cycles in the migration graph that start and end at 0, with no intervening returns to 0.
For an excursion $\ee$ we write $|\ee|$ for the length of the cycle minus 2 --- that is,
the number of time steps spent away from 0.

For a given excursion $\ee$ we define
\begin{align*}
\Kappa(\ee)&:= \bigl\{ 0\le t\le |\ee|+1 \, : \, \ee_{t}\ne \ee_{t+1} \bigr\} \\
\kappa(\ee)&:=\max\bigl\{ \kappa_{j}\, :\, j\in \ee\bigr\};\\
\rho(\ee)&:=\min\left\{\rho^{(j)}\, :\, j\in \ee\right\}.
\end{align*}
Note that 0 and $T$ are always in $\Kappa(\ee)$, and 
the definition of $\kappa_j$ implies that $\kappa(\ee)\le \#\Kappa(\ee)$.
We will refer to $\kappa(\ee)$ as the {\em diameter} of $\ee$.

We write $\hE_{T}$ for the collection of sequences of excursions that can be fit into
time $\{1,\dots,T\}$. That is, an element $\he\in\hE_{T}$ has an {\em excursion count} $k(\he)$, such that
each $i\in \{1,\dots,k(\he)\}$ there is a pair $(t_{i},\he_{i})$ with $t_{i}\in \{2,\dots,T-1\}$ and 
$\he_{i}\in \mE$ satisfying
\begin{align*}
t_{i}+|\he_{i}|&< t_{i+1},\\
t_{k(\he)}+|\he_{k(\he)}|&\le T.
\end{align*}
We write the total length of an excursion sequence as 
$$
\|\he\|:=\sum_{i=1}^{k(\he)} |\he_{i}|.
$$
We also write $\hE_{T;k,n,m}$ for the subset of $\hE_{T}$ comprising excursion sequences
whose excursion count is $k$, whose total length is $n$, and the sum of whose change-point counts
$\# K(\he_{i})$ is $m$. The {\em null excursion sequence} is the element of $\hE_{T}$ with
$k(\he)=\|\he\|=0$. We illustrate an excursion sequence
in Figure \ref{F:mixed_diameter}.

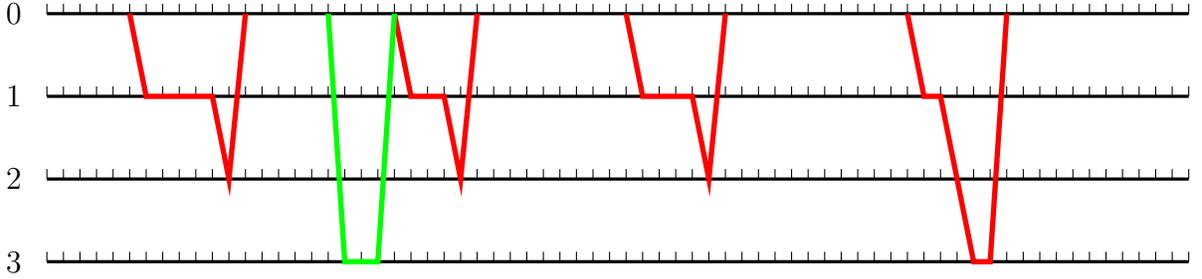
\begin{figure}[ht]
	\begin{tikzpicture}[scale=.22]
	\foreach \y in {0,...,3}{
		\draw[very thick] (0,-5*\y) -- (69,-5*\y);
		\draw (-2,-5*\y) node{\y};
		\foreach \x in {0,...,69}
		\draw (\x , 0-5*\y) -- (\x , 0.6-5*\y);
	}
	\draw[red,line width=2pt] (5,0) -- (6,-5) -- (10,-5) -- (11,-10) -- (12,0);
	\draw[red,line width=2pt] (35,0) -- (36,-5) -- (39,-5) -- (40,-10) -- (41,0);
	\draw[red,line width=2pt] (21,0) -- (22,-5) -- (24,-5) -- (25,-10) -- (26,0);
	% ... and the modified trajectory
	\draw[green,line width=2pt] (17,0) -- (18,-15) -- (20,-15) -- (21,0);
	\draw[red,line width=2pt] (52,0) -- (53,-5) -- (54,-5) -- (55,-10) -- (56,-15) -- (57, -15) -- (58,0);
	\end{tikzpicture}
	\caption{An excursion sequence for $T=70$ comprising $k=5$ excursions. This is based on the
		migration graph example from Figure \ref{F:rho}.
		Three excursions (red) have diameter 3, and one
		(green) has diameter 2. Note that one timepoint ($t=21$) is included in two different excursions. The red excursions all have
	$\rho(\he)= 0.2$, and the green excursion has $\rho(\he)=0.5$. The lengths are 6, 2, 3, 5, 5, giving the sequence a total length $n=21$. The
change-point counts are 3, 2, 3, 3, 4, summing to $m=15$.}
		\label{F:mixed_diameter}
	\end{figure}

The $(0,0)$ entry of the product $R_{T}$ will be a sum of terms that are enumerated by elements of
$\hE_{T}$, corresponding to paths through the sites. We define new random variables
as a function of the realizations of $\bX$ and of $\bA$ (the collection of all matrices $A$)
\begin{equation} \label{E:alphas}
\alpha_{t}(i,j):=
\begin{cases}
 \log\epsilon+\log A_{t}(j,i)- X_{t}^{(0)}  &\text{ if } i\ne j,\\
0&\text{ if } i=j.
\end{cases}
\end{equation}
Given an excursion $\ee$ and a starting time $t_{0}\in \{2,\dots,T-|\ee|\}$ we define the random variables
\begin{equation} \label{E:eebracket}
\begin{split}
\ee[t_{0};\bX,\bA]&:=\sum_{t\in \Kappa(\ee)}
 \alpha_{t+t_{0}}(\ee_{t},\ee_{t+1}) 
   +\sum_{t\in \{1,\dots,|\ee|\}\setminus\Kappa(\ee)} \tX_{t+t_{0}}^{(\ee_{t})} \, ,\\
\ee^\Delta[t_{0};\bX,\bA]&:=-\sum_{t\in \Kappa(\ee)} \Delta_{t_0+t}
+ \sum_{t\in \{1,\dots,|\ee|\}\setminus\Kappa(\ee)}  \left( \Delta_t^{\ee_{t}} - \Delta_t^{0} \right) .
\end{split}
\end{equation}
Of course, this sum may be $-\infty$, if it includes a transition at which the corresponding entry of $A$ is 0. But the assumptions imply that it
is finite with nonzero probability if $\ee\in\mE$. Given an excursion sequence $\he=\bigl( (t_{i},\he_{i}) \bigr)_{i=1}^{k}\in \hE_{T}$, we define
\begin{equation} \label{E:hesum}
\he[\bX,\bA]:=\sum_{i=1}^{k} \he_{i}^\Delta[t_{i};\bX,\bA], \quad \text{ and } \he[\bX,\bA]:=\sum_{i=1}^{k} \he^\Delta_{i}[t_{i};\bX,\bA].
\end{equation}

The quantity we are trying to approximate is
\begin{equation} \label{E:aepsilon0}
a(\epsilon)-a(0)=\lim_{T\to\infty} T^{-1}\Bigl(\log R_{T}(\epsilon)_{0,0}-\sum_{i=1}^{T}X_{i}^{(0)}
\Bigr).
\end{equation}

\begin{lemma}\label{L:allterms}
\begin{equation} \label{E:allterms}
\log R_{T}(0,0)=\sum_{t=1}^{T}X_{t}^{(0)}+\epsilon 
\sum_{t=1}^T \Delta_{t}^0
  +\log \Bigl(1+  \sum_{\he\in \hE_{T}\setminus\{\he^{0}\}} \e^{ \he[\bX,\bA] + 
  \epsilon\he^\Delta[\bX,\bA]} \Bigr),
\end{equation}
where $\he^{0}$ is the null excursion sequence.
\end{lemma}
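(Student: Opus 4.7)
The plan is to expand the $(0,0)$ entry of $R_T(\epsilon)=D_T(\epsilon)\cdots D_1(\epsilon)$ as a sum over paths and then regroup those paths by their excursion structure. Explicitly,
$$R_T(\epsilon)_{0,0}=\sum_{\gamma}\prod_{t=1}^T D_t(\epsilon)_{\gamma_t,\gamma_{t-1}},$$
where the sum runs over all $\gamma=(\gamma_0,\ldots,\gamma_T)\in\{0,\ldots,d-1\}^{T+1}$ with $\gamma_0=\gamma_T=0$. Since $\e^{\epsilon\Delta_t}D_t$ is diagonal while $A_t$ vanishes on the diagonal, the factor at step $t$ equals $\e^{\epsilon\Delta_t^{\gamma_t}}\xi_t^{(\gamma_t)}$ when $\gamma_t=\gamma_{t-1}$ and $\epsilon A_t(\gamma_t,\gamma_{t-1})$ when $\gamma_t\ne\gamma_{t-1}$.

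Next I would set up the natural bijection between such paths and elements of $\hE_T$. Any path starting and ending at $0$ decomposes uniquely into its maximal sub-intervals where $\gamma$ is nonzero, and reading off the start-times and restrictions of $\gamma$ on those sub-intervals produces an excursion sequence; conversely, each $\he\in\hE_T$ prescribes a unique such path. Under this bijection the null sequence $\he^0$ corresponds to the constant-$0$ path, whose contribution is $\prod_{t=1}^T\e^{\epsilon\Delta_t^0}\xi_t^{(0)}$. Factoring this base contribution out rewrites
$$R_T(\epsilon)_{0,0}=\Bigl(\prod_{t=1}^T\e^{\epsilon\Delta_t^0}\xi_t^{(0)}\Bigr)\Bigl(1+\sum_{\he\in\hE_T\setminus\{\he^0\}} r(\he)\Bigr),$$
where $r(\he)$ denotes the ratio of the path contribution corresponding to $\he$ to the base contribution.

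The heart of the proof is to verify that $\log r(\he)=\he[\bX,\bA]+\epsilon\he^\Delta[\bX,\bA]$. This factorises over the excursions of $\he$ and, within each excursion $\he_i$ starting at $t_i$, over time steps. At a dwelling step $t\in\{1,\ldots,|\he_i|\}\setminus\Kappa(\he_i)$ the local log-ratio of $\e^{\epsilon\Delta_{t_i+t}^{\he_{i,t}}}\xi_{t_i+t}^{(\he_{i,t})}$ to $\e^{\epsilon\Delta_{t_i+t}^0}\xi_{t_i+t}^{(0)}$ is $\tX_{t_i+t}^{(\he_{i,t})}+\epsilon(\Delta_{t_i+t}^{\he_{i,t}}-\Delta_{t_i+t}^0)$; at a change point $t\in\Kappa(\he_i)$ the local log-ratio of $\epsilon A_{t_i+t}(\he_{i,t+1},\he_{i,t})$ to $\e^{\epsilon\Delta_{t_i+t}^0}\xi_{t_i+t}^{(0)}$ is $\alpha_{t_i+t}(\he_{i,t},\he_{i,t+1})-\epsilon\Delta_{t_i+t}^0$, directly from \eqref{E:alphas} and the definition $\tX_t^{(j)}=X_t^{(j)}-X_t^{(0)}$. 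Summing across the excursion and then across $i$ reproduces exactly \eqref{E:eebracket}--\eqref{E:hesum}, and taking logs of the factored expression yields \eqref{E:allterms}.

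The calculation is entirely formal, so the only real obstacle is index bookkeeping. In particular one must confirm that the definitions \eqref{E:alphas} and \eqref{E:eebracket} have been arranged so that the $X^{(0)}$ and $\Delta^0$ pieces absorbed at change points into $\alpha$ and at dwelling steps into $\tX$ together exactly cancel those coming from the prefactor $\prod_t\e^{\epsilon\Delta_t^0}\xi_t^{(0)}=\e^{\sum_t X_t^{(0)}+\epsilon\sum_t\Delta_t^0}$, so that no residual terms survive once the prefactor is pulled out.
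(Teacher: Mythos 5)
Your argument is correct and is essentially the same as the paper's own proof: both expand $R_T(\epsilon)_{0,0}$ as a sum over paths from $0$ to $0$, identify nonvanishing paths bijectively with excursion sequences, and compute the log-ratio of each path's contribution to that of the constant-$0$ path, with the diagonal and off-diagonal local ratios matching the definitions of $\tX$ and $\alpha$ in \eqref{E:alphas}--\eqref{E:eebracket}. The bookkeeping check you flag at the end is exactly the step the paper carries out when it observes that $\Kappa(\he_i)$ is precisely the set of change points, so nothing is missing.
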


\begin{proof}
We have, by definition,
$$
R_{T}(0,0)=\sum_{(x_{0},\dots,x_{T})} \prod_{t=1}^{T} D_{t}(\epsilon)(x_{t},x_{t-1}),
$$
where the summation is over $(x_{0},\dots,x_{T})\in \{0,\dots,d-1\}^{T+1}$ with
$x_{0}=x_{T}=0$. Note that we may restrict the summation to $(T+1)$-tuples such
that $D_{t}(\epsilon)(x_{t},x_{t-1})>0$, which will only be true when $(x_{t-1},x_{t})$ is an edge of $\M$.
Such sequences of states map one-to-one onto excursion sequences. The product corresponding to excursion sequence 
$\he=\bigl( (t_{i},\he_{i})\bigr)_{i=1}^{k}$ is
\begin{equation} \label{E:prodhe}
\prod_{i=1}^{k+1}\Bigl(\prod_{t=t_{i-1}+1}^{t_{i}} D_{t}(\epsilon)(0,0) \Bigr) 
   \cdot \prod_{i=1}^{k}\Bigl( \prod_{t=1}^{|\he_{i}|} D_{t+t_{i}}(\epsilon) \bigl((\he_{i})_{t}, (\he_{i})_{t+1}\bigr) \Bigr),
\end{equation}
where $t_{0}=0$ and $t_{k+1}=T$.

We have $D_{t}(\epsilon)(0,0)=\e^{X_{t}^{0)}+\epsilon\Delta_{0}}$. Thus, we may write the log of the expression in \eqref{E:prodhe} as
\begin{equation} \label{E:prodhe2}
\sum_{t=1}^{T}  \bigl( X_{t}^{(0)}+\epsilon\Delta_{0} \bigr)  - \sum_{i=1}^{k} \sum_{t=1}^{|\he_{i}|} 
   \log \frac{D_{t+t_{i}}(\epsilon) \bigl((\he_{i})_{t}, (\he_{i})_{t+1}\bigr)}{D^{*}_{t+t_{i}}(0,0)}
\end{equation}
We note that
$$
\log \frac{D_{t+t_{i}}(\epsilon) (j,j)}{D_{t+t_{i}}(\epsilon)(0,0)}=
     \tX_{t+t_{i}}^{(j)}+\epsilon \left(
     \Delta_{t+t_i}^j - \Delta_{t+t_i}^0 \right)
$$
and for $j\ne j'$,
$$
\log \frac{D_{t+t_{i}}(\epsilon) (j,j')}{D_{t+t_{i}}(\epsilon)(0,0)}=
     \log\epsilon A_{t}(j,j')-X_{t}^{(0)}-\epsilon \Delta_{t+t_i}^0 .
$$
Since $\Kappa(\he_{i})$ is precisely the set of $t$ such that $(\he_{i})_{t}\ne (\he_{i})_{t+1}$,
this means that \eqref{E:prodhe2} is precisely the same as $\he_{i}[t_{i};\bX,\bA]$,
which completes the proof.
\end{proof}

Thus
\begin{equation}  \label{E:firstLB}
\log R_{T}(0,0)-\epsilon \sum_{t=1}^T\Delta_t^{0}- \sum_{t=1}^{T} X_{t}^{(0)} \ge  \max_{\he\in\hE_{T}} \he[\bX,\bA] -
 \max_{\he\in\hE_{T}} \he^\Delta[\bX,\bA] ,
\end{equation}
and
\begin{equation}  \label{E:firstUB}
\begin{split}
\log R_{T}(0,0)&-\epsilon \sum_{t=1}^T\Delta_t^{0}- \sum_{t=1}^{T} X_{t}^{(0)}\\
  &\le 3\log T+\max_{1\le k,n,m\le T} \Bigl( \log \#\hE_{T;k,n,m} +\max_{\he\in\hE_{T;k,n,m}} \he[\bX,\bA]  \Bigr)  + \max_{\he\in\hE_{T}} \he^\Delta[\bX,\bA].
\end{split}
\end{equation}
Combining this with \eqref{E:aepsilon0} yields the bounds we will use:
\begin{equation}  \label{E:secondLB}
a(\epsilon)-a(0) \ge  \liminf_{T\to\infty}T^{-1}\max_{\he\in\hE_{T}} \he[\bX,\bA]  - \epsilon \|\Delta\|,
\end{equation}
and
\begin{equation}  \label{E:secondUB}
\begin{split}
a(\epsilon)&-a(0) 
 \le \limsup_{T\to\infty}T^{-1} \max_{1\le k,n,m\le T}\Bigl( \log  \#\hE_{T;k,n,m} +\max_{\he\in\hE_{T;k,n,m}} \he[\bX,\bA ]  \Bigr) 
 +\epsilon \|\Delta\|.
\end{split}
\end{equation}

\section{Derivation of the upper bound}  \label{sec:distinctUB}
We prove the upper bound in \eqref{E:polyepsMoCd3}. We may replace $A_{t}(i,j)$ by
$A_{t}(i,j)\vee 1$ for any $(i,j)\in\M$, since
decreasing $A_t$ can only decrease $a(\epsilon)-a(0)$.
That is, we put a floor under those off-diagonal elements which are
allowable migrations. This avoids the nuisance of having entries
be sometimes 0, and an upper bound that holds under these conditions
will hold {\em a fortiori} under the original conditions. Indeed, we may assume without loss of generality that all $A_t(j,i)=1$ identically for $i\ne j$, since $a(\epsilon)$ --- the stochastic growth rate with the correct values of
$A_t$ --- is no larger than $a(A_* \epsilon; \mathbf{1})$, the stochastic growth rate where all
values of $A_t$ are replaced by 1. This changes our
upper bound only by a constant, which may be absorbed into
the constant of the theorem. Thus, we will proceed under
this assumption.

%We also begin by assuming that there is a single site --- which we may
%call site 1 without loss of generality --- that minimizes both $\kappa_{j}$ and
%$\rho_{j}$. Since we are only concerned with the behavior of $a(\epsilon)$ as
%$\epsilon\downarrow 0$, we will, without loss of generality, confine our attention to
%\begin{equation} \label{E:smalleps}
%0\le\epsilon\le \e^{-4\ttau_{A} (\rho_{1}\vee 1)}.
%\end{equation}

An element of $\hE_{T;k,n,m}$ may be determined by the following choices:
\begin{enumerate}
\item Choose $k$ points out of $T$ where the excursions begin,
yielding no more than $\binom{T}{k}$ possibilities;
\item Choose $k$ numbers for the lengths of the excursions that add up
to $n$, yielding no more than $\binom{n}{k}$ possibilities;
\item Choose $m-2k$ timepoints within these excursions as times when there
is a change of site, yielding at most $\binom{n}{m-2k}$ possibilities;
\item There are no more than $d^{m}$ ways to choose the sites to which
the excursions move at the $m$ times when there is a change.
\end{enumerate}
A crude bound based on Stirling's Formula is
$$
\log\binom{a}{b}\le b+b\log\frac{a}{b},
$$
which holds for all positive integers $b$ and $0\le a\le b$, 
as long as we adopt the convention $0\cdot \log 0=0\cdot \log \infty=0$.
Then
\begin{equation} \label{E:boundlogFTkn}
\log \#\hE_{T;k,n,m} \le m\log d+(m-2k)\log\frac{n}{m-2k}
   +k\log\frac{n}{k} +T\log\frac{T}{k}.
\end{equation}

\begin{Claim}  \label{C:smallP}
Suppose that $\rho_j$ and $\kappa_j$ are each minimised at
site $j=1$. For any positive $\constcp>0$, and any
$$
z\ge \bigl( \epsilon\log \epsilon^{-1}\bigr)^{2\kappa_{1}\rho_{1}/(1+2\rho_{1})}
   \cdot (\log \epsilon^{-1})^{\constcp},
$$
we have
\begin{equation} \label{E:theclaim}
\limsup_{T\to\infty}  T^{-1} \log\P\bigl\{ \max_{\he\in\hE_{T;k,n,m}}\bigl( \he[\bX,\mathbf{1}]+\log\#\hE_{T;k,n,m}\bigr)\ge zT\bigr\} <0
\end{equation}
for all $\epsilon>0$ sufficiently small.
\end{Claim}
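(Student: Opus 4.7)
The plan is to combine a Gaussian tail bound for each $\he[\bX,\mathbf{1}]$ with a union bound over $\hE_{T;k,n,m}$, and then optimise over $(k,n,m)$. Under the reduction $A_t\equiv 1$ at the start of the section, $\he[\bX,\mathbf{1}]$ as defined in \eqref{E:eebracket} is a sum of Gaussians: $m$ change-point terms $\log\epsilon-X_t^{(0)}$ (mean $\log\epsilon-\mu_0$, variance $\tau^{(0)}$) together with $n-m+k$ stay terms $\tX_t^{(j_t)}$ (mean $-\tmu^{(j_t)}$, variance $\tau^{(j_t)}$). Writing $\he[\bX,\mathbf{1}]\sim N(M_\he,V_\he)$ and using $\P\{N(M,V)\ge y\}\le e^{-(y-M)^2/(2V)}$, a union bound at $y=zT-\log\#\hE_{T;k,n,m}$ reduces the claim to showing that
$$\mathcal{E}(k,n,m):=\min_{\he\in\hE_{T;k,n,m}}\frac{(y-M_\he)^2}{2V_\he}-\log\#\hE_{T;k,n,m}$$
exceeds $\delta(\epsilon)T$ for some $\delta(\epsilon)>0$, uniformly in $(k,n,m)$, once $\epsilon$ is small and $T$ large.

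The worst-case $\he$ can be pinned down directly. Among site assignments on the $s:=n-m+k$ stays, $(y-M_\he)^2/V_\he$ is minimised by placing all stays at the site with smallest $\rho^{(j)}$—site $j=1$ by hypothesis—while the adversary simultaneously picks $m=k\kappa_1$, feasible because $\kappa_j$ is also minimised at $j=1$, and strictly optimal because each extra change contributes an additional $\log\epsilon\ll 0$ to the mean and only $O(\log n)$ to $\log\#\hE_{T;k,n,m}$. Writing $\ell:=\log\epsilon^{-1}$, minimising the quadratic over $n$ gives $n^\ast\approx(y+k\kappa_1\ell)/\tmu^{(1)}$ with minimum value $2\rho^{(1)}(y+k\kappa_1\ell)$; plugging in and using the Stirling bound \eqref{E:boundlogFTkn} to estimate $\log\#\hE_{T;k,n^\ast,k\kappa_1}\lesssim k\log(T/k)+O(k\log n^\ast)$ yields
$$\mathcal{E}\bigl(k,n^\ast,k\kappa_1\bigr)\;\approx\;2\rho^{(1)}\bigl(zT+k\kappa_1\ell\bigr)-(1+2\rho^{(1)})\bigl[k\log(T/k)+O(k)\bigr].$$

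The optimisation over $k$ is where the exponent $\beta=2\kappa_1\rho_1/(1+2\rho_1)$ emerges. Setting $\partial_k\mathcal{E}=0$ gives $\log(T/k^\ast)\approx 1+\beta\ell$, hence $k^\ast\asymp T\epsilon^{\beta}$; the value $\beta$ is precisely what makes the two $\ell$-linear contributions—$2\rho^{(1)}\kappa_1 k^\ast\ell$ from the tail and $(1+2\rho^{(1)})\beta k^\ast\ell$ from the combinatorial factor—cancel exactly (since $(1+2\rho^{(1)})\beta=2\kappa_1\rho^{(1)}$). What survives is
$$\mathcal{E}/T\;\gtrsim\;2\rho^{(1)}z-(1+2\rho^{(1)})e^{-1}\epsilon^{\beta},$$
and the hypothesis $z\ge(\epsilon\ell)^{\beta}\ell^{\constcp}=\epsilon^{\beta}\ell^{\beta+\constcp}$ then yields
$$\mathcal{E}/T\;\gtrsim\;\epsilon^{\beta}\bigl[2\rho^{(1)}\ell^{\beta+\constcp}-O(1)\bigr],$$
which is strictly positive for all sufficiently small $\epsilon$.

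The main obstacle is confirming that this local saddle really controls $\mathcal{E}$ globally, not just locally. Small $k=O(1)$ is harmless since $\log\#\hE=O(\log T)$ is swallowed by the Gaussian tail, and $k$ a positive fraction of $T$ is dispatched because $|M_\he|\gtrsim k\ell$ dominates linearly in $T$. The more delicate point is the feedback by which $\log\#\hE_{T;k,n,m}$ enters both $y$ and $n^\ast$; resolving this by a one-step iteration produces a polylogarithmic residual in $\ell$, and this is precisely the role of the $(\log\epsilon^{-1})^{\constcp}$ slack. Finally, rigorously ruling out $m>k\kappa_1$ and non-site-$1$ assignments as adversary strategies—rather than merely treating them as natural candidates—requires careful monotonicity bookkeeping in $m$ and in each site choice, but contains no genuinely new difficulty.
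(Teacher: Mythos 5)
Your proposal follows essentially the same route as the paper: a Gaussian tail bound for each fixed excursion sequence (with the variance controlled through $\tau^{(j)}\le \tmu^{(j)}/\rho_1$ and an AM--GM step), a union bound using the Stirling estimate \eqref{E:boundlogFTkn}, and an optimisation of the resulting entropy--energy tradeoff over $(k,n,m)$, with the exponent $2\kappa_1\rho_1/(1+2\rho_1)$ emerging from exactly the same balance of $k\log(T/k)$ against $2\rho_1\bigl(zT+k\kappa_1\log\epsilon^{-1}\bigr)$. The only real difference is organisational: the paper establishes negativity of the reparametrised exponent $\Theta(S,\beta,u)$ uniformly by a case analysis over three parameter regions, which avoids having to identify the worst-case $(k,n,m)$ and site assignment that you defer to ``monotonicity bookkeeping.''
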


We prove this claim in section \ref{sec:proveClaim}, and proceed here under this assumption.
This means that
$$
\sum_{T=T_{0}}^{\infty} \P\left\{ T^{-1}\max_{1\le k,n,m\le T}\Bigl(\log\#\hE_{T;k,n,m}+\max_{\he\in\hE_{T;k,n,m}} \he[\bX,\bA] \Bigr)
   \ge z \right\}<\infty.
$$
By the Borel--Cantelli Lemma, this implies that with probability 1 this event
occurs only finitely often. It follows that the limsup is smaller than $z$ almost surely, and hence,
by \eqref{E:secondUB}, that
\begin{equation} \label{E:aepsdone}
a(\epsilon)-a(0)\le  \bigl( \epsilon\log \epsilon^{-1}\bigr)^{2\kappa_{1}\rho_{1}/(1+2\rho_{1})}
\cdot (\log \epsilon^{-1})^{\constcp}.
\end{equation}

It remains only to clear away the assumption that that $\kappa_{j}$ and $\rho_{j}$ are both minimized at site 1.
We do this by stratifying the excursions further by their diameter (recall the definition from section \ref{sec:Excursions}).
Define 
$$
\breve{\rho}(\kappa):=\min\bigl\{ \rho_{j}\, : \, \kappa_{j}\le \kappa\bigr\}.
$$
If $\ee$ is an excursion with diameter $\kappa$, then any site $j$ included in $\ee$
has $\kappa_{j}\le \kappa$, hence also $\rho_{j}\ge \breve{\rho}(\kappa)$. Furthermore,
\begin{equation} \label{E:minjkappa}
 \min_{1\le j\le d-1}2\rho_{j}\kappa_{j}/(1+2\rho_{j})=\min_{2\le \kappa\le d-1} 2\rho(\kappa)\kappa/(1+2\rho(\kappa))
\end{equation}

The maximum in \eqref{E:secondUB} may be written as a maximum over $(k_{2},\dots,k_{d-1})$, representing
the number of excursions whose diameter is $2,3,\dots,d-1$, with
the constraint $\sum k_{\kappa}=k$. We write $\hE_{T;k,n,m}^{(\kappa)}$ for the excursion sequences consisting of 
$k$ excursions, all of which have diameter $\kappa$; and $\hE_{T;(k_{\kappa}),n,m}$ for the set of excursion sequences that have
exactly $k_{\kappa}$ excursions with diameter $\kappa$. 
Then $\hE_{T;(k_{\kappa}),n,m}$ naturally includes the direct sum
of $\hE_{T;k_{\kappa},n,m}^{(\kappa)}$. (A sequence of mixed diameters $\he \in \hE_{T;(k_{\kappa}),n,m}$ may be
decomposed into sequences 
$\he_{(\kappa)} \in \hE_{T;k_{\kappa},n,m}^{(\kappa)}$
of excursions with each particular diameter. Referring back to the example in Figure \ref{F:mixed_diameter},
this would entail making one excursion sequence by
dropping out the green excursions, and a separate one
by dropping out the red excursions.) Thus
\begin{align*}
\max_{\he\in\hE_{T;k,n,m}} \he[\bX,\bA]&=\max_{\sum k_{\kappa}=k} \:\max_{\he\in\hE_{T;(k_{\kappa}),n,m}} \he[\bX,\bA]\\
&\le \max_{\sum k_{\kappa}=k} \:\max_{\he_{(\kappa)}\in\hE_{T;k_{\kappa},n,m}^{(\kappa)}} \sum_{\kappa}\he_{(\kappa)}[\bX,\bA]\\
&\le \sum_{\kappa}\max_{1\le  k_{\kappa}\le T} \:\max_{\he_{(\kappa)}\in\hE_{T;k_{\kappa},n,m}^{(j)}} \he_{(\kappa)}[\bX,\bA],
\end{align*}
using the general fact that the maximum of a sum is smaller
than the sum of maxima.
Thus we have
\begin{align*}
\max_{1\le k,n,m\le T} \Bigl(\log\#\hE_{T;k,n,m}+\max_{\he\in\hE_{T;k,n,m}} \he[\bX,\bA] \Bigr)
&\le \sum_{\kappa}\max_{1\le  k_{\kappa},n_{\kappa},m_{\kappa}\le T} \:\max_{\he_{(\kappa)}\in\hE_{T;k_{\kappa},n_{\kappa},m_{\kappa}}^{(\kappa)}} \he_{(\kappa)}[\bX,\bA].
\end{align*}

Because all excursions in $\hE_{T;k_{\kappa},n_{\kappa},m_{\kappa}}^{(\kappa)}$ pass through only sites $j$
with $\rho_{j}\ge \breve{\rho}(\kappa)$, the same argument used for the upper bound in \eqref{E:aepsdone} may be applied to show that
almost surely 
$$
\limsup_{T\to\infty} T^{-1}\max_{1\le  k_{\kappa},n_{\kappa},m_{\kappa}\le T} \:\max_{\he_{(\kappa)}\in\hE_{T;k_{\kappa},n_{\kappa},m_{\kappa}}^{(\kappa)}} \he_{(\kappa)}[\bX,\bA]
   \le c_{\kappa}(\log\epsilon^{-1})^{c'_{\kappa}} \epsilon^{2\kappa\breve{\rho}(\kappa)/(1+2\breve{\rho}(\kappa))}.
$$
It follows that for $c:=(d-2)\cdot \max c_{\kappa}$ and $c':=\max c'_{\kappa}$,
\begin{align*}
\limsup_{T\to\infty} T^{-1}\max_{1\le k,n,m\le T} \Bigl(\log\#\hE_{T;k,n,m}+&\max_{\he\in\hE_{T;k,n,m}} \he[\bX,\bA] \Bigr)\\
   &\le \sum_{\kappa} c_{\kappa}(\log\epsilon^{-1})^{c'_{\kappa}} \epsilon^{2\kappa\breve{\rho}(\kappa)/(1+2\breve{\rho}(\kappa))}\\
   &\le c(\log\epsilon^{-1})^{c'} \epsilon^{\min_{1\le j\le d-1}2\kappa_{j}\rho_{j}/(1+2\rho_{j})}
\end{align*}
by \eqref{E:minjkappa}, which completes the proof.

\section{Derivation of the lower bound}  \label{sec:distinctLB}
We show that the upper bound applies for each $j$;
it will then hold in particular for the $j$ at which $\kappa_{j}\rho_{j}$ attains its minimum.
We may assume without loss of generality that this
optimal site is $j=1$, and
we will write simply $\kappa$, $\tmu$, and $\rho$  for $\kappa_{1}$, $\tmu^{(1)}$,
and $\rho_{1}$.

Let $0=j_{0},j_{1},j_{2},\dots,j_{I}=1,j_{I+1},\dots, j_{\kappa-1},j_{\kappa}=0$,
be a cycle from 0 in $\M$, passing through 1.
We may fix a real number $A_{*}$ and $p>0$ such that 
\begin{equation} \label{E:prodA}
\P\Bigl\{ \sum_{i=0}^{\kappa-1}\bigl(\log X_{t}^{(0)}-\log A_{t+i}(j_{i},j_{i+1}) \bigr)<\kappa A_{*} \, \Bigm| \, \eu{D}\, \Bigr\} \ge p \text{ almost surely},
\end{equation}
where $\eu{D}$ is the sigma-algebra generated by all
the matrices $D_t(0)$.

Assume that $\epsilon\le \e^{-1}$ and $T>\log\epsilon^{-1}$.
Defining $k=\lfloor T/m\rfloor$ and $m=\lfloor \kappa\log\epsilon^{-1}/\tmu\rfloor + \kappa$, we will apply \eqref{E:secondLB} by considering only excursions of length exactly $m-1$,
which proceed exactly through the sequence of sites
$0=j_{0},j_{1},\dots,j_{I-1},1,\dots,1,j_{I+1},\dots,j_{\kappa-1},j_{\kappa}=0$, where site 1 is repeated exactly $m-\kappa+1$ times. 
The basic idea is that the excursion fills a time block of length $m$, proceeding
as quickly as possible from 0 to 1, remaining as long as possible at 1, and then
returning to 0.

We define the standard excursion $\ee_{\circ}:=(j_{1},\dots,j_{I-1},1,\dots,1,
j_{I+1},\dots,j_{\kappa-1})$, with $m-\kappa+1$ repetitions of site 1; and an excursion sequence
$\he_{\circ}$ consisting of those pairs $(\ell m+1,\ee_{\ell})$ for which
\begin{equation} \label{E:fgood}
Y_{\ell}:=\ee_{\circ}\bigl[ \ell m+1; \bX,\bA \bigr] >0.
\end{equation}
That is, $\he_{\circ}$ is put together from identical excursions of form $\ee_{\circ}$
which can start only at times $\ell m+1$. Each one of the $k$
possible excursions is included precisely when its contribution
to the sum would be positive.

We have
$$
\he[\bX,\bA]=\sum_{\ell=0}^{k-1}(Y_{\ell})_{+} \,.
$$
Since the excursion contributions $Y_{\ell}$ are all independent, combining
\eqref{E:secondLB} with the Strong Law of Large Numbers yields
\begin{equation} \label{E:helim}
a(\epsilon)-a(0) \ge \frac{\E[(Y_{\ell})_{+}]}{m} -\epsilon \|\Delta\|.
\end{equation}

We now observe that for any $\ell$
$$
Y_{\ell}
=\sum_{i=0}^{I-1} \alpha_{\ell m+i}(j_{i},j_{i+1})
  +\sum_{i=I}^{\kappa-1} \alpha_{(\ell+1) m-\kappa+i}(j_{i},j_{i+1})
   \,+\, \sum_{t=\ell m+I}^{(\ell+1)m-\kappa+I-1} \tX_{t}^{(1)} .
$$
Note that the $\alpha_{t}$ terms are independent of the $\tX_{t}^{(1)}$ terms.

By \eqref{E:prodA}, for any $y>0$,
\begin{align*}
\P\bigl\{ Y_{\ell}>(m-\kappa +1)y \bigr\} &\ge p\P\Bigl\{ \left(\frac{\tmu}{\rho}(m-\kappa+1)\right)^{1/2}Z > (m-\kappa+1)(y+\tmu)+\kappa\bigl(\log \epsilon^{-1} + A_{*}\bigr) \Bigr\}\\
& \ge p\P\Bigl\{ \left(\frac{\tmu}{\rho}(m-\kappa+1)\right)^{1/2}Z > (m-\kappa+1)(y+ 2\tmu) +\kappa A_{*} \Bigr\}\\
& \ge p\P\left\{ Z > \left( \frac{m\rho}{\tmu} \right)^{1/2} \left(y+ \tmu \left(2 +\delta\right) \right) \right\},
\end{align*}
where
$$
Z:=\left(\frac{\tmu}{\rho}(m-\kappa+1)\right)^{-1/2}\sum_{t=\ell m+I}^{(\ell+1)m-\kappa+I-1} (\tX_{t}^{(1)}+\tmu)
$$
is a standard Gaussian random variable and
$\delta:= A_* /\log \epsilon^{-1}$. 
By Formula 7.1.13 of \cite{AS65} we know that
for all $z\ge 0$,
\begin{equation}  \label{E:normalbound}
	\frac{\e^{-z^2/2}}\ge \frac{\e^{-z^2/2}}{2z+1} \ge \P\left\{Z>z \right\} \ge \frac{\e^{-z^2/2}}{2z+2}.
\end{equation}
Hence for any $z_*>0$ we have
\begin{align*}
	\E\left[ \frac{(Y_\ell)_+}{m-k+1} \right]&
		=\int_0^\infty  \P\bigl\{ Y_{\ell}>(m-\kappa +1)y \bigr\} \dif y\\
	&\ge \int_0^\infty 
	p\P\left\{ Z > \left( \frac{m\rho}{\tmu} \right)^{1/2} \left(y+ \tmu \left(2 +\delta \right) \right) \right\} \dif y\\
	&\ge \frac{pz_*}{(4+2\delta)\tmu +2 z_* + 2}
	\exp\left\{ -\frac{m\rho}{2\tmu} \left( 
	z_* + \tmu(2+\delta)\right)^2 \right\}
\end{align*}
Taking $z_*=1/(4+2\delta)m\rho$ yields
\begin{align*}
	\E\left[ \frac{(Y_\ell)_+}{m-\kappa+1} \right]
		&\ge \frac{p}{ (25\tmu + 10) m\rho +2}
		\exp\left\{ -\frac12(2+\delta)^2 m\rho\tmu -1 \right\}\\
		&\ge \frac{p\e^{-1-3\kappa\rho\tmu}}{\kappa (1+ \rho (25 + 10/\tmu) \log \epsilon^{-1})} \e^{2\kappa\rho(1+\delta/2)^2 \log \epsilon}\\
		&\ge \frac{p\e^{-1-3\kappa\rho\tmu}}{\kappa (1+ \rho (25 + 10/\tmu) \log \epsilon^{-1})} \e^{2\kappa\rho(1+\delta/2)^2 \log \epsilon}\\
		&\ge \frac{p\e^{-1-\kappa\rho(3\tmu+5A_*)}} {\kappa (1+ \rho (25 + 10/\tmu) \log \epsilon^{-1})} \epsilon^{2\kappa\rho}
\end{align*}
for $\epsilon$ sufficiently small that $\delta\le 0.4$
and $m^2\rho^2 \ge \frac18$.

Combining this with \eqref{E:helim}, and assuming
$\epsilon$ small enough that $\tmu/\log\epsilon^{-1}<\frac12$, we have $1-\kappa/m\ge \frac12$ so
$$
	a(\epsilon)- a(0) \ge \frac{C}{\log \epsilon^{-1}}
     \epsilon^{2\kappa\rho}
$$
where
$$
	C= 		\frac{p\e^{-1-\kappa\rho(3\tmu+5A_*)}} {2\kappa (1+ \rho (25 + 5/\tmu)}.
$$

\section{Proof of Claim \ref{C:smallP}} \label{sec:proveClaim}
Since the probability is decreasing in $z$, it will suffice to show the statement is
true for
\begin{equation} \label{E:choosez}
z = \epsilon^{2\kappa_{1}\rho_{1}/(1+2\rho_{1})}
   \cdot (\log \epsilon^{-1})^{\constcp},
\end{equation}
where $\constcp$ is any constant larger than $2\kappa_{1}$.

We define 
$$
\zeta:=\frac{k}{T},\quad\nu:= \frac{n}{k},\quad\beta:=\frac{m}{k}.
$$
That is, $\zeta$ is the rate of excursions per unit time; $\nu$ is the average
length of excursions; and $\beta$ is the average diameter of excursions.
We have the constraints $1/\zeta\ge\nu \ge \beta\ge \kappa_{1} \ge 1$
(since $\kappa_{1}$ is the minimum $\kappa_{j}$, hence the minimum 
number of changes in each excursion).
Then the bound \eqref{E:boundlogFTkn} may be written as
\begin{equation} \label{E:boundlogFTkn2}
\log \#\hE_{T;k,n,m} 
\le \zeta T\left(\beta\log d
   +(\beta-1)\log\nu -(\beta-2)\log(\beta-2)-\log\zeta\right).
\end{equation}

Suppose now we fix some element $\he$ of $\hE_{T;k,n,m}$,
and list all the states of all the excursions in order as $j_{1},\dots,j_{n}$,
we have
$$
\E\bigl[\he[\bX;\indic ]\bigr]\le  m\log\epsilon-\sum_{i=1}^{n} \tmu^{(j_{i})}
$$
and the random variable $Y:=\he[\bX;\indic ] - \E[\he[\bX;\indic ]]$
is Gaussian with variance bounded by $\sum_{i=1}^{n} \ttau^{(j_{i})}$.

For any $x,z>0$, by \eqref{E:normalbound}
\begin{align*}
\P\bigl\{ \bigl( \he[\bX,\indic]&+x\bigr)\ge zT\bigr\} 
\le \P\Bigl\{ Y \ge \sum_{i=1}^{n}\tmu^{(j_{i})} + m\log\epsilon^{-1}
   + zT-x\Bigr\}\\
&\le  \exp\biggl\{-\frac12 \left(\sum_{i=1}^{n} \ttau^{(j_{i})} %+m\ttau_{A}
\right)^{-1} \left( \sum_{i=1}^{n}\tmu^{(j_{i})} \,+\,
   m\log\epsilon^{-1}+zT -x\right)^{2} \biggr\}.
\end{align*}
We are assuming that $\rho_{j}$ is minimized
at $j=1$, $\ttau^{(j_{i})}\le \tmu^{(j_{i})}/\rho_{1}$, so that
\begin{align*}
\log\P\bigl\{ \bigl( \he[\bX,\indic]&+x\bigr)\ge zT\bigr\} \\
&\le  -\frac12 \left(\frac{1}{\rho_{1}}\sum_{i=1}^{n} \tmu^{(j_{i})} %+m\ttau_{A}
\right)^{-1} \left( 
   \sum_{i=1}^{n}\tmu^{(j_{i})}+m\log\epsilon^{-1}+zT -x\right)^{2}.
%&\le -2\rho_{1}\left(m\log\epsilon^{-1}+z -x\right)-2\rho_{1}^{2}m\ttau_{A}\\
%&= -2\zeta\rho_{1}\left(\beta\mu_{A}+\beta\log\epsilon^{-1}+z-\frac{x}{k}+\rho_{1}\beta\ttau_{A} \right) T,
\end{align*}

Taking $x=\log\#\hE_{T;k,n,m}$ and substituting \eqref{E:boundlogFTkn2},
we get
\begin{equation} \label{E:Plogbound}
\begin{split}
\log \, \P\bigl\{ & \max_{\he\in\hE_{T;k,n,m}}  \bigl( \he[\bX,\indic]+\log\#\hE_{T;k,n,m}\bigr)
   \ge zT\bigr\} \\
&\le \log\#\hE_{T;k,n,m}+
\max_{\he\in\hE_{T;k,n,m}}\log\P\bigl\{ \bigl( \he[\bX,\indic]+\log\#\hE_{T;k,n,m}\bigr)
   \ge zT\bigr\} \\
&\le T \sup_{S\ge \kappa_1}  \sup_{\beta\ge\kappa_{1}} \sup_{0\le \zeta\le 1}\zeta \biggl[\beta\log dS -\log\zeta
   -(\beta-2)\log(\beta-2)\\
&\hspace*{1cm} - \frac{ \rho_1}{ 2S } \biggl( S +\log\zeta+\frac{z}{\zeta}+(\beta-2)\log(\beta-2)
   -\beta\log dS+\beta\log\epsilon^{-1}  \Bigr)\biggr)^{2} \,\biggr],
\end{split}
\end{equation}
where 
$$
S:=k^{-1}\sum_{i=1}^{n}\tmu^{(j_i)} \ge \nu \tmu_* \ge \beta\tmu_* \ge \kappa_1 \tmu_*
$$
and $\tmu_*:= \min_j \tmu^{(j)}$. Our assumptions ensure that $\tmu_{*}>0$.

We need to show that this supremum is strictly negative.
We write this as $\sup \frac{z}{u} \Theta$, where $u=z/\zeta$ and
\begin{equation} \label{E:Theta}
\begin{split}
 \Theta =& \Theta(S,\beta,u)\\
 : = &-\frac{\rho_1}{2S} \left( S + 
 \beta \log\epsilon^{-1}+\log z +(\beta-2)\log(\beta-2)
 -\beta\log dS -\log u +u\right)^2 \\
 &\hspace*{2cm} - \log z -(\beta-2)\log(\beta-2)
 +\beta\log dS +\log u .
\end{split}
\end{equation}
Here we have taken advantage of the fact that $\log\zeta<0$.
%
%We may find $\epsilon_{0}$ such that for all $0<\epsilon\le \epsilon_{0}$, 
%\begin{align}
%	& \frac{2}{\rho_1} + \frac{2 \log(\rho_1+1)}{\log^2\epsilon^{-1}} \le \log^2\epsilon^{-1}, \label{E:epsbound1}\\
%	&2 \log d +4 \log\log \epsilon^{-1}+ \kappa_1 \log \epsilon^{-1} \le \log^2 \epsilon \label{E:epsbound2}\\
%	& \epsilon^{-1}\log \epsilon^{-1} \le d^{\kappa_1},
%		\label{E:epsbound3}\\
%	& \mathrm{e}^{-1} + \log \left(1+\frac{1}{2\rho_1}\right) 
%		\le (\constcp - 2\kappa_1 ) \log \log \epsilon^{-1} ,
%		\label{E:epsbound3}\\
%	& (2\rho_1 +1 ) \log d \log^2 \epsilon^{-1} 
%		\le 2\rho_1 \log \epsilon^{-1} . \label{E:epsbound4}
%\end{align}
%We need consider only $\epsilon$ in the range $(0,\epsilon_0)$.

We will now show that there are positive constants
$C$, $\Theta_0$, and $\epsilon_0$ (expressible in terms only of
$\constcp,\rho_{1},\kappa_{1},\mu_{*},d$, 
such that for any fixed $\epsilon\in (0,\epsilon_{0})$,
$$
z= \epsilon^{2\kappa_{1}\rho_{1}/(1+2\rho_{1})} (\log \epsilon^{-1})^{\constcp},
$$
and any $S\ge 1$, $\beta\ge \kappa_1$, and $u\ge 0$,
$$
\Theta(S,\beta,u) \le -\Theta_0 - C u.
$$
The result then follows immediately, since then for all $T$,
$$
	T^{-1}\log \, \P\bigl\{  \max_{\he\in\hE_{T;k,n,m}}  \bigl( \he[\bX,\indic]+\log\#\hE_{T;k,n,m}\bigr)
	\ge zT\bigr\} 
	\le zT \sup_{S\ge 1}  \sup_{\beta\ge\kappa_{1}} \sup_{u\ge 0}\frac{1}{u} \Theta(S,\beta,u) \le -Cz .
$$

We consider three different regions for the parameters:
\begin{enumerate}
	\item $S >\log^{2}\epsilon$ and $\beta> \frac{dS}{\log^2 dS} +2$;
	\label{nurange1}
	\item $S >\log^{2}\epsilon$ and $\beta\le \frac{dS}{\log^2 dS} + 2$;
	\label{nurange2}
	\item $\tmu_*\kappa_1\le S \le \log^{2}\epsilon.$
	\label{nurange3}
\end{enumerate}
In range \eqref{nurange1} we have,
$$
  (\beta-2)\log(\beta-2) - \beta\log dS \ge  - 2\log dS 
  	-2\beta \log\log \epsilon^{-1}.
$$
For $\epsilon$ sufficiently small
\begin{align*}
	\Theta &\le  -\frac{\rho_1}{2S} \left( S + (\beta - 2\kappa_1 ) \log\epsilon^{-1} -2\beta \log \log \epsilon^{-1}
	- 2 \log d S +u-\log u\right)^2 \\
	&\hspace*{4cm}+  2\kappa_1 \log\epsilon^{-1}+ 2\log dS  + 2\beta \log\log \epsilon^{-1} + \log u \\
	& \le - \frac{\rho_1}{2} \left( \log^2 \epsilon^{-1} -8\log \log \epsilon^{-1} \right) -\frac{\rho_1}{2} u
	-\beta \left(\rho_1  \log \epsilon^{-1} - 4\log \log \epsilon^{-1}  -2 \kappa_1 \right)
	+(\rho_1 + 1)\log \left(\frac{2\rho_1 + 2}{\mathrm{e}\rho_1} \right) + (\rho_1+ 2) \log d \\
	& \le -\frac{\rho_1}{3} \log^2 \epsilon^{-1}
		-\frac{\rho_1}{2} u
\end{align*}
for $\epsilon$ sufficiently small.

In the range \eqref{nurange2}
\begin{align*}
	\Theta & \le -\frac{\rho_1}{2S} \left( S- \frac{dS}{\log dS}  - \kappa_1 \log \epsilon^{-1} +u -\log u \right)^2
	+ \frac{dS}{\log dS} + \kappa_1 \log \epsilon^{-1} +\log u\\
	&\le -\log^2 \epsilon^{-1}\left( \frac{\rho_1}{2} - \frac{(\rho_1 + 1)d}{\log \log \epsilon^{-1}}\right)
	-\frac{\rho_1}{2} u
	+\kappa_1(\rho_1 + 1) \log \epsilon^{-1} + (\rho_1 + 1)\log \left(\frac{2\rho_1 + 2}{\mathrm{e}\rho_1} \right)\\
	& \le -\frac{\rho_1}{3} \log^2 \epsilon^{-1}
	-\frac{\rho_1}{2} u
\end{align*}
for $\epsilon$ sufficiently small.

In the range \eqref{nurange3} we rewrite $\Theta$ as
$$
\Theta = -\frac{\rho_1}{2S} \left( y+S\right)^2 -y +\beta\log\epsilon^{-1} + u,
$$
where
$$
y:=  \beta \log\epsilon^{-1}+\log z +(\beta-2)\log(\beta-2)
-\beta\log dS -\log u +u .
$$ 
We note that
\begin{align*}
	y&\ge -\beta \log dS + (\beta - \kappa_1) \log \epsilon^{-1} 
	+ \constcp \log \log \epsilon^{-1}+ u-\log u\\
	&\ge  \left(u-\log u \right) + (\constcp- 2\kappa_1) \log \log \epsilon^{-1} -\kappa_1 \log d\tmu_* \\
	&\hspace*{1cm} + (\beta - \kappa_1) \left(\log \epsilon^{-1} -\log\log \epsilon^{-1} - \log d\tmu_* \right)\\
	&\ge 0 
\end{align*}
for $\epsilon$ sufficiently small.
Applying the AM--GM inequality to the first term, we see that
\begin{align*}
  \Theta &\le -(2\rho_1 + 1) y +\beta \log \epsilon^{-1} + u \\
  &\le -2\rho_1 \beta \log\epsilon^{-1} -\rho_1 u -(2\rho_1 + 1) \log z \\
  	&\hspace*{3cm}+ (2\rho_1 + 1) \left[\log \left(1+ \frac{1}{2\rho_1}\right) 
  	-(\beta-2)\log(\beta-2) +\beta\log dS \tmu_* \right]\\
  &\le -\rho_1 u -(\beta- \kappa_1) \left( 2\rho_1 \log \epsilon^{-1}
  -(2\rho_1 +1 ) \log d \log^2 \epsilon^{-1} \right)\\
  &\hspace*{3cm} -(2\rho_1 +1 ) \left[(\constcp - 2\kappa_1 ) \log \log \epsilon^{-1}
  -\mathrm{e}^{-1} - \log \left(1+ \frac{1}{2\rho_1}\right) 
  \right] .
\end{align*}
The last term on the right-hand side is negative 
for $\epsilon$ sufficiently small (and goes
to $-\infty$ as $\epsilon \to 0$);
the same is true of the second term unless $\beta=\kappa_1$,
in which case that term is 0.

\nc{\hbeta}{\beta}

\section{Sub-Gaussian log growth rates} \label{sec:Orlicz}
In our analysis of the case of migration where the optimal site is unique, we have assumed 
that our log growth rates are Gaussian. This is for convenience, simplifying the notation.
In fact, the results depend only on the asymptotic tail behavior. In this section we outline
the modifications that are required for the extension to the sub-Gaussian case.

In \cite{BLM13} a random variable $Z$ is said to be {\em sub-Gaussian} if it
has finite {\em variance factor} $\tau(Z)$, defined as
\begin{equation} \label{E:sGs}
\tau^*(Z):=\inf \left\{ c\ge 0\,:\, \E\left[\e^{\lambda Z}\right] \le 
\e^{c\lambda^{2}/2} \, \forall \lambda \in\R\right\}.
\end{equation}
(The square-root of this 
is called the {\em sub-Gaussian standard} in \cite{BK00}.)
This may be thought of as an upper 
bound on the scale of the tails, and it is this that determines the lower bound on the sensitivity of $a$.
That is, in Theorem \ref{T:diffrate1}  the upper bounds still hold when
the assumption that $\tX_t^{(j)}$ is Gaussian with variance $\tau$ is replaced by sub-Gaussian
with variance factor $\tau^*$.

Similarly, the lower bound on $a(\epsilon)$ only depends on a Gaussian lower bound on the tails
\begin{equation} \label{E:subvar}
\tau_{*}(Z):= \liminf_{z\to\infty}\frac{z^{2}}{-2\log \P\bigl\{ Z>z\bigr\}}
\end{equation}
being nonzero.
That is, in Theorem \ref{T:diffrate1}  the upper and lower bounds still hold when
the assumption Gaussian with variance $\tau$ is replaced by $\tau^*$ and $\tau_*$ respectively.

We point out here that the assumption that $\tX_{t}=\log(\xi_{t}^{(1)}/\xi_{t}^{(0)})$ 
have nonzero $\tau_*$ implies what may be considered exceptionally heavy tails
for the growth rates --- effectively, something like log-normal. 
This is what is required for a nontrivial lower bound in Theorem \ref{T:diffrate1}.
Thus, it seems plausible to infer that the population will obtain no long-term benefit
from sending occasional individuals to a site with lower average growth, unless
the low average growth is compensated by fat positive tails, meaning that there is a small chance
of a very large payoff. (These nearly heavy tails may also be generated if $\xi_{t}^{(0)}$
puts too much probability near 0 --- that is, a population crash.)

The proof of the lower bound can easily be generalized to the sub-Gaussian case, if we replace the specific
calculation of tail probabilities based on the Gaussian distribution 
with a bound based on Cram\'er's Theorem \cite[Theorem 2.2.3]{DZ09}. (The power in the lower bound would need to be increased by an arbitrarily small $\delta$.) The extension of the upper bound of Theorem \ref{T:diffrate1} can be done with the methods
of \cite{dP90} for bounding the tails of maxima
in terms of the {\em Orlicz norm}.
Letting $\Psi(x)=e^{x^{2}}/5$, the Orlicz norm 
$\|Z\|_{\Psi}$ for a centered random variable $Z$ 
is defined to be
\begin{equation} \label{E:orlicz}
\|Z\|_\Psi:=\inf\{C\,:\, \E[\Psi(|Z|/C)]<1\}.
\end{equation}
We present here some elementary results about Orlicz norms and their relationship to the sub-Gaussian variance factors.

\begin{lemma}  \label{L:sGs}
	A sub-Gaussian centered random variable $Z$ satisfies
	\begin{align}
	\label{E:orliczsG}
	\|Z\|_{\Psi}&\le \sqrt{\frac{5 \tau^*(Z)}{2}};\\
	\tau_{*}(Z)&\le \tau^*(Z)<\infty. \label{E:comparevar}
	\end{align}
	
	If $Z$ is Gaussian with mean 0 and variance $\sigma^{2}$ then 
	\begin{equation} \label{E:gausssubgauss}
	\tau_{*}(Z)=\tau^*(Z)=\sigma^{2}.
	\end{equation}
\end{lemma}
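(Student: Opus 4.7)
The plan is to dispatch the three claims in order of increasing ease, beginning with the Orlicz bound \eqref{E:orliczsG}, which carries essentially all of the work; the comparison $\tau_*\le\tau^*$ then follows from a one-line Chernoff argument, and the Gaussian identities are standard. The genuine idea is how to convert a bound on the MGF of $Z$ (which is what \eqref{E:sGs} controls) into a bound on the MGF of $Z^2$ (which is what is needed to control an Orlicz norm against $\Psi(x)=e^{x^2}/5$).

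For \eqref{E:orliczsG}, I would bound the Laplace transform of $Z^2$ via the Gaussian integral identity
\begin{equation*}
e^{\lambda z^2} \;=\; \int_{-\infty}^{\infty} \frac{1}{\sqrt{2\pi}}\, e^{z\sqrt{2\lambda}\,x \,-\, x^2/2}\, dx, \qquad \lambda>0,
\end{equation*}
which is the standard normal MGF $\E[e^{cG}]=e^{c^2/2}$ rearranged with $c=z\sqrt{2\lambda}$. Substituting $z=Z$, taking expectations, swapping integrals via Fubini, and applying \eqref{E:sGs} to the inner expectation with $c=\sqrt{2\lambda}\,x$ yields
\begin{equation*}
\E\bigl[e^{\lambda Z^2}\bigr] \;\le\; \int \frac{1}{\sqrt{2\pi}}\, e^{-x^2(1-2\lambda\tau^*)/2}\, dx \;=\; \frac{1}{\sqrt{1-2\lambda\tau^*}}
\end{equation*}
for any $\lambda<1/(2\tau^*)$. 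Choosing $\lambda=2/(5\tau^*)$ makes $1-2\lambda\tau^*=1/5$, so that $\E[e^{2Z^2/(5\tau^*)}]\le \sqrt{5}$. With $C=\sqrt{5\tau^*/2}$ this gives $\E[\Psi(|Z|/C)]=\E[e^{Z^2/C^2}]/5\le 1/\sqrt{5}<1$, and the definition \eqref{E:orlicz} then delivers $\|Z\|_\Psi\le C$. The one delicate step here is the specific choice of $\lambda$: the constant $5$ in the definition of $\Psi$ has been calibrated exactly so that some admissible $\lambda$ of the form $c/\tau^*$ brings $1/\sqrt{1-2\lambda\tau^*}$ strictly below $5$.

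For \eqref{E:comparevar}, combining the sub-Gaussian bound on $\E[e^{\lambda Z}]$ with Markov's inequality and then optimizing in $\lambda$ at $\lambda=z/\tau^*$ gives the standard Chernoff tail estimate $\P\{Z>z\}\le e^{-z^2/(2\tau^*)}$. Hence $-2\log\P\{Z>z\}\ge z^2/\tau^*$ for every $z>0$, so $z^2/(-2\log\P\{Z>z\})\le\tau^*$ uniformly in $z$, and the liminf as $z\to\infty$ defining $\tau_*(Z)$ in \eqref{E:subvar} is likewise bounded by $\tau^*$. Finiteness of $\tau^*(Z)$ is simply the assumption that $Z$ is sub-Gaussian.

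For the Gaussian identity \eqref{E:gausssubgauss}, the exact MGF $\E[e^{\lambda Z}]=e^{\lambda^2\sigma^2/2}$ makes the inequality in \eqref{E:sGs} equivalent to $c\ge\sigma^2$ (take any $\lambda\ne 0$ and cancel $\lambda^2/2$), so $\tau^*(Z)=\sigma^2$. For $\tau_*$, the Mills-ratio asymptotic $\P\{Z>z\}\sim \sigma(z\sqrt{2\pi})^{-1}\, e^{-z^2/(2\sigma^2)}$ gives $-2\log\P\{Z>z\} = z^2/\sigma^2 + 2\log z + O(1)$, whence $z^2/(-2\log\P\{Z>z\})\to\sigma^2$ as $z\to\infty$, matching $\tau^*$ and completing the lemma.
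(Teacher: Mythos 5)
Your proof is correct, and for the main inequality \eqref{E:orliczsG} it takes a genuinely different route from the paper. The paper first converts the MGF bound into the tail bound $\P\{|Z|>z\}\le \e^{-z^{2}/2\tau^*}$ via Chernoff, and then bounds $\E[\e^{Z^{2}/C^{2}}]$ by integrating by parts against this tail, obtaining the value $5$ at $C=\sqrt{5\tau^*/2}$ (so that $\E[\Psi(|Z|/C)]\le 1$). You instead pass directly from the MGF of $Z$ to the MGF of $Z^{2}$ by the Gaussian linearization $\e^{\lambda z^{2}}=\E[\e^{z\sqrt{2\lambda}\,G}]$ and Tonelli, getting $\E[\e^{Z^{2}/C^{2}}]\le\sqrt{5}$ at the same $C$. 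Your route is slightly cleaner on two counts: Tonelli needs no justification beyond nonnegativity, and your strict bound $\E[\Psi(|Z|/C)]\le 1/\sqrt5<1$ meshes exactly with the strict inequality in the definition \eqref{E:orlicz}, whereas the paper's computation lands exactly on the boundary value $1$ (and in fact the paper's displayed constant $2\tau^{*2}/(C^{2}-2\tau^*)$ appears to be a typo for $2\tau^{*}/(C^{2}-2\tau^*)$). For \eqref{E:comparevar} your Chernoff argument is the same as the paper's, modulo the paper's use of $\tau^*+\delta$ with $\delta\downarrow 0$ to handle the infimum in \eqref{E:sGs}, a point you implicitly absorb by noting the infimum is attained. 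The paper dismisses \eqref{E:gausssubgauss} as trivial; your Mills-ratio computation supplies the missing detail correctly.
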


\begin{proof}
	The statement \eqref{E:gausssubgauss} is trivial.
	
	If $\tau^*=\tau^*(Z)$ is finite then for any $ \lambda,z,\delta>0$,
	$$
	\P\bigl\{ |Z|>z \bigr\} \le \e^{\frac{(\tau^*+\delta)\lambda^{2}}{2}- \lambda z } .
	$$
	Taking $\lambda=z/(\tau^*+\delta)$, we have $\P\bigl\{ |Z|>z \bigr\} \le \e^{-z^{2}/2(\tau^*+\delta)}$,
	which implies
	\begin{equation} \label{E:taubound}
	\P\bigl\{ |Z|>z \bigr\} \le \e^{-z^{2}/2\tau^*},
	\end{equation}
	since $\delta$ is arbitrary.
	This immediately proves \eqref{E:comparevar}.
	
	Integrating by parts, we have for $C>\sqrt{2\tau}$,
	\begin{align*}
	\E\left[ \e^{Z^{2}/C^{2}} \right] &=1+ \frac{2}{C^{2}}\izf z \e^{z^{2}/C^{2}} 
	\P\bigl\{ |Z|>z \bigr\}\\
	&\le 1+ \frac{2}{C^{2}}\izf z \e^{z^{2}/C^{2}} \e^{-z^{2}/2\tau^*}\\
	&= 1+\frac{2\tau^{*2}}{C^{2}-2\tau^*}.
	\end{align*}
	If $C=\sqrt{5\tau^*/2}$ then this bound is 5, proving \eqref{E:orliczsG}.
\end{proof}

Since it is a norm, the Orlicz norm of an 
arbitrary sum of random variables is no greater than the sum of the Orlicz norms.
For independent sub-Gaussian random variables $X_1,\dots,X_k$
the variance factors are also sub-additive.

\begin{lemma}  \label{L:Orlicztau}
	For any independent centered sub-Gaussian random variables $X_{1},\dots,X_{k}$,
	\begin{equation} \label{E:sGsum}
	\tau^*\Bigl(\sum X_{i}\Bigr) \le \sum \tau^*(X_{i}) ,
	\end{equation}
	and
	\begin{equation} \label{E:sGsum2}
	\P\Bigl\{\Bigl|\sum X_{i}\Bigr|>x\Bigr\} \le \exp\Bigl\{-\left(2\sum \tau^*(X_{i}) \Bigr)^{-1} x^{2} \right\}.
	\end{equation}
	Also
	\begin{equation} \label{E:Orlicztau}
	\| X_1+\cdots+X_k\|_\Psi \le \sqrt{5/2}\left( \sum \tau^*(X_{i}) \right)^{1/2}.
	\end{equation}
	
	If $\max\tau^*(X_{i})\le \tau$ then
	\begin{equation} \label{E:sGsum3}
	\P\Bigl\{\Bigl|\sum X_{i}\Bigr|>x\Bigr\} \le \exp\left\{- \frac{x^{2}}{2k\tau} \right\}.
	\end{equation}
	and
	\begin{equation} \label{E:Orlicztau2}
	\| X_1+\cdots+X_k\|_\Psi \le \sqrt{\frac{5k}{2} \tau }.
	\end{equation}
\end{lemma}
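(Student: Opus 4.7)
The plan is to derive every assertion from the defining moment generating function (MGF) inequality for sub-Gaussian variables, together with independence, and then pass to Orlicz norms by invoking Lemma \ref{L:sGs}. No new analytic machinery is needed; the work is entirely bookkeeping.

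First I would prove \eqref{E:sGsum} directly from \eqref{E:sGs}. For each fixed $\lambda\in\R$, the definition of $\tau^*(X_i)$ gives $\E[\e^{\lambda X_i}]\le \exp(\tau^*(X_i)\lambda^2/2)$ (after sending the infimising sequence through). Because the $X_i$ are independent,
\begin{equation*}
\E\Bigl[\exp\Bigl(\lambda\sum_i X_i\Bigr)\Bigr]=\prod_i \E[\e^{\lambda X_i}]\le \exp\Bigl(\Bigl(\sum_i\tau^*(X_i)\Bigr)\lambda^2/2\Bigr),
\end{equation*}
so the constant $c=\sum_i \tau^*(X_i)$ is admissible in the infimum \eqref{E:sGs} applied to $\sum_i X_i$, proving $\tau^*(\sum X_i)\le\sum\tau^*(X_i)$.

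Next I would obtain \eqref{E:sGsum2} by repeating verbatim the Chernoff argument already used for \eqref{E:taubound} in the proof of Lemma \ref{L:sGs}, applied to the centred variable $\sum_i X_i$: for any $\delta>0$ and $\lambda>0$, $\P\{\sum X_i>x\}\le\exp((\sum\tau^*(X_i)+\delta)\lambda^2/2-\lambda x)$, optimise in $\lambda$ by taking $\lambda=x/(\sum\tau^*(X_i)+\delta)$, do the same on the other tail, and let $\delta\downarrow 0$. The Orlicz bound \eqref{E:Orlicztau} then follows immediately by feeding the sub-additivity \eqref{E:sGsum} into \eqref{E:orliczsG} of Lemma \ref{L:sGs}:
\begin{equation*}
\|X_1+\cdots+X_k\|_\Psi\le\sqrt{\tfrac{5\tau^*(\sum X_i)}{2}}\le\sqrt{\tfrac{5}{2}}\Bigl(\sum\tau^*(X_i)\Bigr)^{1/2}.
\end{equation*}

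Finally, \eqref{E:sGsum3} and \eqref{E:Orlicztau2} are immediate consequences of the first three assertions: if $\max_i \tau^*(X_i)\le\tau$ then $\sum\tau^*(X_i)\le k\tau$, and substituting into \eqref{E:sGsum2} and \eqref{E:Orlicztau} gives the two claimed uniform-bound versions. There is no serious obstacle here — everything reduces to the observation that sub-Gaussian MGF bounds multiply under independence, after which Lemma \ref{L:sGs} does all the remaining work of converting the variance-factor bound to an Orlicz-norm bound.
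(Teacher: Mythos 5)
Your proposal is correct and follows essentially the same route as the paper: the paper simply cites Lemma 1.7 of \cite{BK00} for the subadditivity \eqref{E:sGsum} (which you prove directly via the standard MGF-multiplication argument), applies the tail bound \eqref{E:taubound} to the sum for \eqref{E:sGsum2}, and gets the Orlicz statements from Lemma \ref{L:sGs}, exactly as you do. The only blemish you inherit from the paper's own \eqref{E:taubound} is the missing factor of $2$ when the two one-sided Chernoff bounds are combined into a two-sided bound, which is immaterial for the applications here.
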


\begin{proof}
	Statement \eqref{E:sGsum} is Lemma 1.7 of \cite{BK00}, and \eqref{E:sGsum2}
	follows by \eqref{E:taubound}. The remainder follows by Lemma \ref{L:sGs}.
\end{proof}

\section{Simulations}  \label{sec:simulation}
We consider a $3\times 3$ example:
$$
M_{t}(\mu,\sigma^{2})=
\begin{pmatrix}
 e^{\sigma Z_{t}^{(0)}}& 0&0\\
0& e^{\sigma Z_{t}^{(1)}-0.1}&0\\
0&0& e^{\sigma Z_{t}^{(2)}-0.2}
\end{pmatrix}, \qquad
A_{t}(C)=
\begin{pmatrix}
0&C&1\\1&0&C\\C&1&0
\end{pmatrix},
$$
with $Z_{t}^{(0)},Z_{t}^{(1)},Z_{t}^{(2)}$ i.i.d.\  standard normal random variables,
and $C$ is a nonnegative constant. If $C=0$ then the migration graph is
a cycle of length 3, so $\kappa_{1}=\kappa_{2}=3$; if $C>0$ then $\kappa_{1}=\kappa_{2}=2$.

We consider three different cases for $(\sigma^{2},C)$: $I: (0.5,1)$, 
$II: (0.5,0)$, and $III: (1,1)$.
We expect to find $\log a(\epsilon)/\log \epsilon^{-1}$
converging to a constant as $\epsilon\downarrow 0$. We have $\tmu^{(1)}=0.1$
in all three cases. For cases I and II we have $\rho^{(1)}=0.1$, so
that the power for case I is between
$$
2\cdot 2 \cdot 0.1=0.4 \qquad \text{and}\qquad \frac{2\cdot 2\cdot 0.1}{1+2\cdot 0.1}
=\frac13,
$$
and for case II is between
$$
2\cdot 3 \cdot 0.1=0.6 \qquad \text{and}\qquad \frac{2\cdot 3\cdot 0.1}{1+2\cdot 0.1}
=0.5.
$$
For case III $\rho^{(1)}$ is decreased to 0.05, so the power is between
$$
2\cdot 2 \cdot 0.05=0.2 \qquad \text{and} \qquad \frac{2\cdot 2\cdot 0.05}{1+2\cdot 0.05}=\frac{2}{11}.
$$
 (Setting $\mu=0$ would put this into the setting of \cite{diapause2018}, with $a(\epsilon)$ behaving like $c/\log\epsilon^{-1}$
for some constant $c$, when $\epsilon$ is small.)

We plot some simulated results in Figures \ref{F:diapausesim1} through  \ref{F:diapausesim3},
plotting the $\log a(\epsilon)$ against $\log\epsilon^{-1}$. In the limit as $\epsilon\to 0$
this should approach a line whose slope is in the range given for the power of $\epsilon$ 
in Theorem \ref{T:diffrate1}. We plot lines with those slopes in each figure, and see that
in the lowest range of $\epsilon$ (we take it down to $\epsilon=10^{-6}$) the slope comes down
close to the upper limit, but is still higher. Of course, this is completely consistent with
the true exponent being at the upper limit, particularly since
we don't know anything yet about how
small $\epsilon$ would need to be before the asymptotic slope becomes apparent.

\begin{figure}[ht]
\begin{center}
\includegraphics[width=10cm]{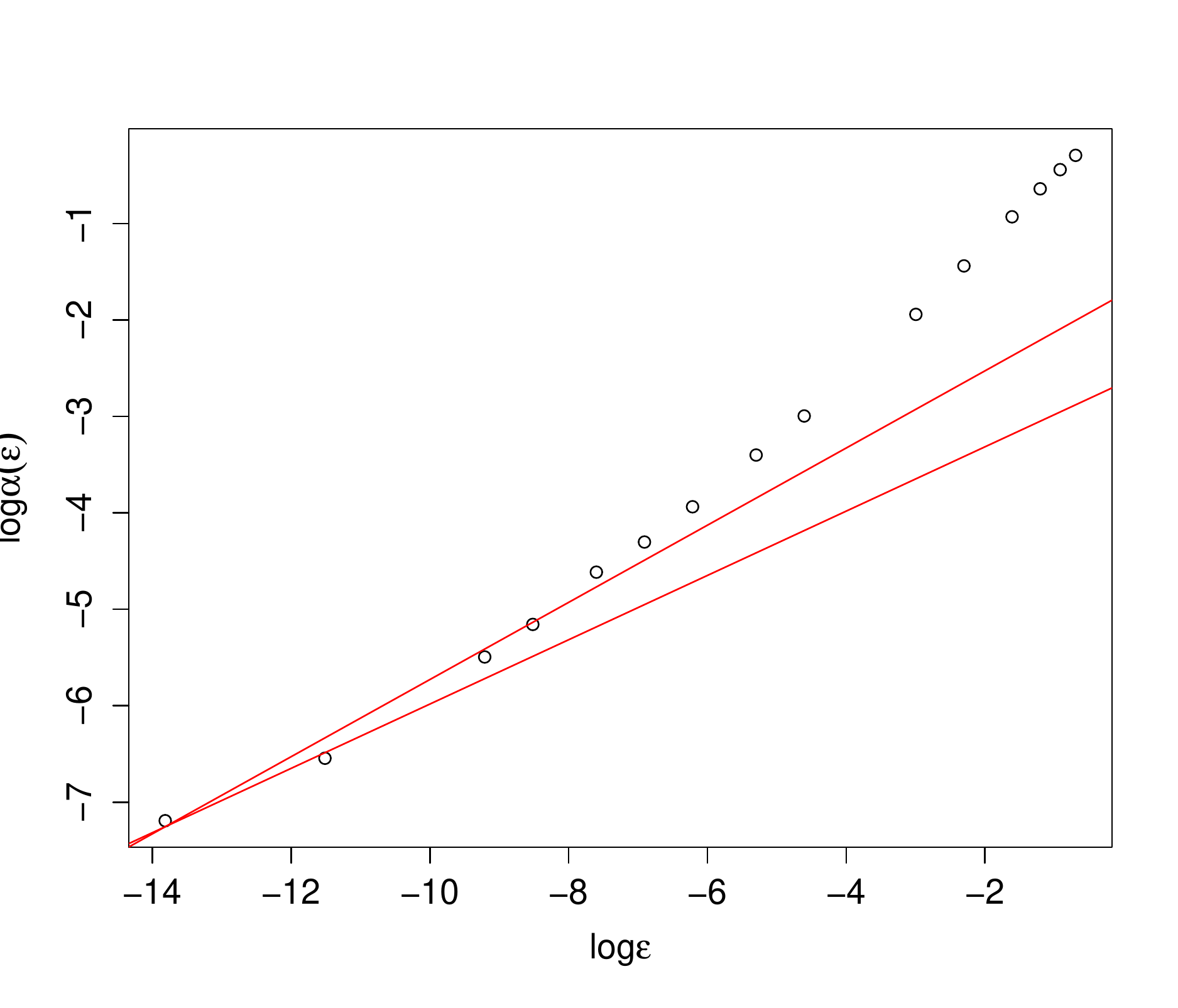}
\caption{Simulated migration example with path length 2. The red lines have slope $0.4$ and $1/3$.}
\label{F:diapausesim1}
\end{center}
\end{figure}

\begin{figure}[ht]
\begin{center}
\includegraphics[width=10cm]{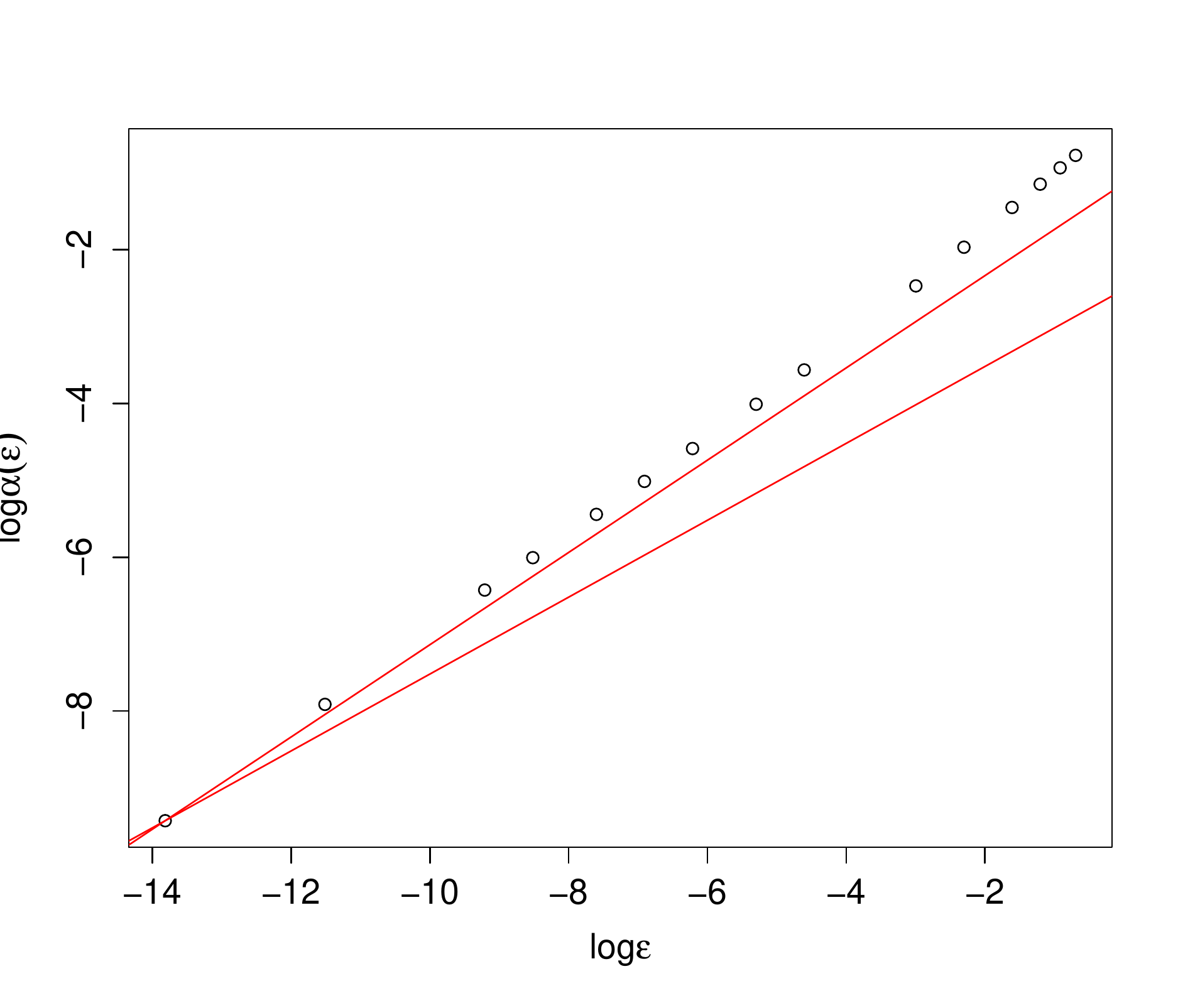}
\caption{Simulated migration example with path length 3. The red lines have slope $0.5$ and $0.6$.}
\label{F:diapausesim2}
\end{center}
\end{figure}

\begin{figure}[ht]
\begin{center}
\includegraphics[width=10cm]{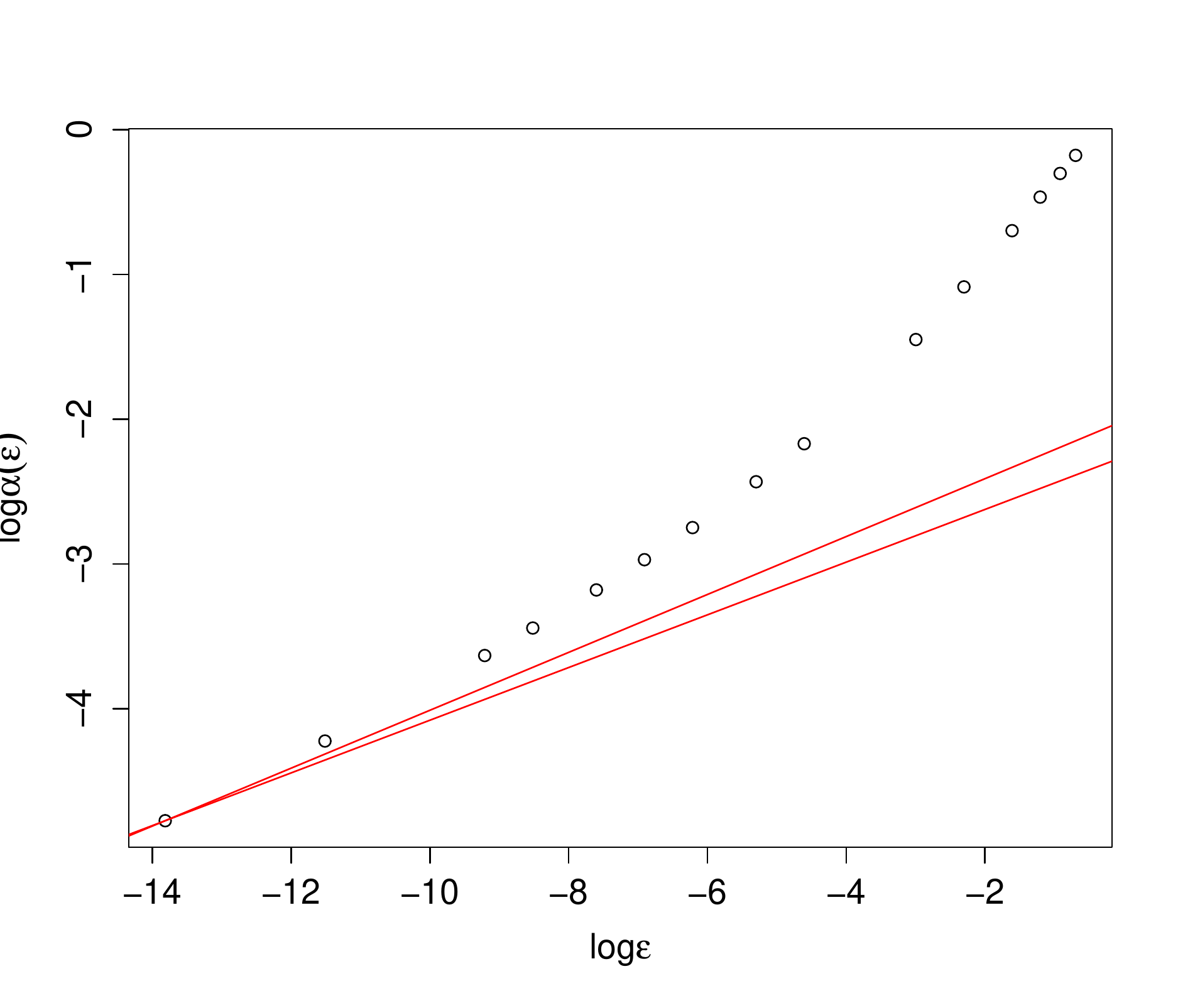}
\caption{Simulated migration example with path length 2, and $\sigma^{2}=1$. The red lines have slope $0.2$ and $2/11$.}
\label{F:diapausesim3}
\end{center}
\end{figure}

%\printbibliography

\end{document}